%%
%% This is file `sample-sigconf.tex',
%% generated with the docstrip utility.
%%
%% The original source files were:
%%
%% samples.dtx  (with options: `sigconf')
%% 
%% IMPORTANT NOTICE:
%% 
%% For the copyright see the source file.
%% 
%% Any modified versions of this file must be renamed
%% with new filenames distinct from sample-sigconf.tex.
%% 
%% For distribution of the original source see the terms
%% for copying and modification in the file samples.dtx.
%% 
%% This generated file may be distributed as long as the
%% original source files, as listed above, are part of the
%% same distribution. (The sources need not necessarily be
%% in the same archive or directory.)
%%
%% The first command in your LaTeX source must be the \documentclass command.
\documentclass[sigconf]{acmart}
% \settopmatter{printacmref=false}

\usepackage{pgfplots}
\pgfplotsset{compat=1.9}

\usepackage{hyperref}
\usepackage{subcaption}
\usepackage{amsthm}
\usepackage{amsmath}
\usepackage{siunitx}
\usepackage{makecell}
\usepackage{tablefootnote}
\usepackage{tikz} 

\usepackage{algorithm}
\usepackage{algpseudocode}

 %Use Input in the format of Algorithm
 %Use Output in the format of Algorithm

\newtheorem{proposition}{Proposition}[section]
\newtheorem{lemma}{Lemma}[section]

\usepackage{tabstackengine}
\setstackEOL{\cr}

%%
%% \BibTeX command to typeset BibTeX logo in the docs
\AtBeginDocument{%
  \providecommand\BibTeX{{%
    \normalfont B\kern-0.5em{\scshape i\kern-0.25em b}\kern-0.8em\TeX}}}

%% Rights management information.  This information is sent to you
%% when you complete the rights form.  These commands have SAMPLE
%% values in them; it is your responsibility as an author to replace
%% the commands and values with those provided to you when you
%% complete the rights form.
% \setcopyright{acmcopyright}
% \copyrightyear{2018}
% \acmYear{2018}
% \acmDOI{10.1145/1122445.1122456}

% %% These commands are for a PROCEEDINGS abstract or paper.
\copyrightyear{2021}
\acmYear{2021}
\setcopyright{acmcopyright}
\acmConference[CIKM '21] {Proceedings of the 30th ACM International Conference on Information and Knowledge Management}{November 1--5, 2021}{Virtual Event, Australia.}
\acmBooktitle{Proceedings of the 30th ACM Int'l Conf. on Information and Knowledge Management (CIKM '21), November 1--5, 2021, Virtual Event, Australia}
\acmPrice{15.00}
\acmISBN{978-1-4503-8446-9/21/11}
\acmDOI{10.1145/XXXXXX.XXXXXX}

%%
%% Submission ID.
%% Use this when submitting an article to a sponsored event. You'll
%% receive a unique submission ID from the organizers
%% of the event, and this ID should be used as the parameter to this command.
%%\acmSubmissionID{123-A56-BU3}

%%
%% The majority of ACM publications use numbered citations and
%% references.  The command \citestyle{authoryear} switches to the
%% "author year" style.
%%
%% If you are preparing content for an event
%% sponsored by ACM SIGGRAPH, you must use the "author year" style of
%% citations and references.
%% Uncommenting
%% the next command will enable that style.
%%\citestyle{acmauthoryear}

%%
%% end of the preamble, start of the body of the document source.
\begin{document}

\fancyhead{}

%%
%% The "title" command has an optional parameter,
%% allowing the author to define a "short title" to be used in page headers.
\title{Grassland: A Rapid Algebraic Modeling System for Million-variable Optimization}

\author{Xihan Li}
\affiliation{%
  \institution{University College London}
  \city{London}
  \country{The United Kingdom}}
\email{xihan.li@cs.ucl.ac.uk}
\author{Xiongwei Han}
\affiliation{%
  \institution{Huawei Noah's Ark Lab}
  \city{Shenzhen}
  \country{China}}
\email{hanxiongwei@huawei.com}
\author{Zhishuo Zhou}
\affiliation{%
  \institution{Fudan University}
  \city{Shanghai}
  \country{China}}
\email{zhouzs18@fudan.edu.cn}
\author{Mingxuan Yuan}
\affiliation{%
  \institution{Huawei Noah's Ark Lab}
  \city{Shenzhen}
  \country{China}}
\email{yuan.mingxuan@huawei.com}
\author{Jia Zeng}
\affiliation{%
  \institution{Huawei Noah's Ark Lab}
  \city{Shenzhen}
  \country{China}}
\email{zeng.jia@huawei.com}
\author{Jun Wang}
\affiliation{%
  \institution{University College London}
  \city{London}
  \country{The United Kingdom}}
\email{jun.wang@cs.ucl.ac.uk}

%%
%% By default, the full list of authors will be used in the page
%% headers. Often, this list is too long, and will overlap
%% other information printed in the page headers. This command allows
%% the author to define a more concise list
%% of authors' names for this purpose.
% \renewcommand{\shortauthors}{Trovato and Tobin, et al.}

%%
%% The abstract is a short summary of the work to be presented in the
%% article.
\begin{abstract}
  An algebraic modeling system (AMS) is a type of mathematical software for optimization problems, which allows users to define symbolic mathematical models in a specific language, instantiate them with given source of data, and solve them with the aid of external solver engines. With the bursting scale of business models and increasing need for timeliness, traditional AMSs are not sufficient to meet the following industry needs: 1) million-variable models need to be instantiated from raw data very efficiently$;$ 2) Strictly feasible solution of million-variable models need to be delivered in a rapid manner to make up-to-date decisions against highly dynamic environments. \textit{Grassland} is a rapid AMS that provides an end-to-end solution to tackle these emerged new challenges. It integrates a parallelized instantiation scheme for large-scale linear constraints, and a sequential decomposition method that accelerates model solving exponentially with an acceptable loss of optimality. Extensive benchmarks on both classical models and real enterprise scenario demonstrate $6\sim10$x speedup of Grassland over state-of-the-art solutions on model instantiation. Our proposed system has been deployed in the large-scale real production planning scenario of Huawei. With the aid of our decomposition method, Grassland successfully accelerated Huawei's million-variable production planning simulation pipeline from hours to $3 \sim 5$ minutes, supporting near-real-time production plan decision making against highly dynamic supply-demand environment.
\end{abstract}

%%
%% The code below is generated by the tool at http://dl.acm.org/ccs.cfm.
%% Please copy and paste the code instead of the example below.
%%
\begin{CCSXML}
% <ccs2012>
% <concept>
% <concept_id>10002950.10003705</concept_id>
% <concept_desc>Mathematics of computing~Mathematical software</concept_desc>
% <concept_significance>500</concept_significance>
% </concept>
% <concept>
% <concept_id>10010405.10010481</concept_id>
% <concept_desc>Applied computing~Operations research</concept_desc>
% <concept_significance>500</concept_significance>
% </concept>
% </ccs2012>
<ccs2012>
<concept>
<concept_id>10010147.10010148.10010162</concept_id>
<concept_desc>Computing methodologies~Computer algebra systems</concept_desc>
<concept_significance>300</concept_significance>
</concept>
<concept>
<concept_id>10010405.10010481.10010482.10003259</concept_id>
<concept_desc>Applied computing~Supply chain management</concept_desc>
<concept_significance>300</concept_significance>
</concept>
</ccs2012>
\end{CCSXML}

\ccsdesc[500]{Mathematics of computing~Mathematical software}
\ccsdesc[500]{Applied computing~Operations research}
% \ccsdesc[300]{Computing methodologies~Computer algebra systems}
\ccsdesc[300]{Applied computing~Supply chain management}

%%
%% Keywords. The author(s) should pick words that accurately describe
%% the work being presented. Separate the keywords with commas.
\keywords{algebraic modeling system, large-scale optimization}

%% A "teaser" image appears between the author and affiliation
%% information and the body of the document, and typically spans the
%% page.
% \begin{teaserfigure}
%   \includegraphics[width=\textwidth]{sampleteaser}
%   \caption{Seattle Mariners at Spring Training, 2010.}
%   \Description{Enjoying the baseball game from the third-base
%   seats. Ichiro Suzuki preparing to bat.}
%   \label{fig:teaser}
% \end{teaserfigure}

%%
%% This command processes the author and affiliation and title
%% information and builds the first part of the formatted document.
\maketitle

\section{Introduction}

Mathematical optimization is a powerful analytics technology that allows companies to make optimal decisions based on available business data, widely applied in industry scenarios including logistics\cite{epstein2012strategic}, manufacturing\cite{chen1997linear}, finance\cite{cornuejols2006optimization} and energy. It abstracts key features of a complex business problem as an optimization model, which consists of objectives (business goal), variables (decisions to be made) and constraints (business rules). In such a way the business problem is decomposed into two stages: \textit{conversion} --- converting the problem to a canonical optimization model, and \textit{solving} --- finding the optimal or approximated solution of the model. 

In practice, the conversion stage consists of two steps. The first step is \textit{modeling}, which means writing down the expression of objective and constraints in a formulated way. E.g., using mathematical expression $\sum_{i \in S} x_i \leq cap$ to formulate a capacity constraint that the total production amount of specific products should not exceed the plant capacity. The second step is \textit{instantiation}, which means to generate a particular instance of the model when real data are available. E.g., when we know today's plant capacity is $cap = 100$ and available set of products is $S = \{1, 3, 5\}$, we get a particular instance of expression $x_1 + x_3 + x_5 \leq 100$ for today (and tomorrow's instance might be very different). For the solving stage, we usually use solver engines that are sophisticatedly developed to solve specific kinds of models, such as Gurobi, CPLEX, Mosek and CLP. Decomposition methods may also apply when the model is large. A toy example of mathematical optimization pipelines for practical online business scenarios is shown in \autoref{fig:MP_pipeline}, in which the high-level system to handle both the conversion and solving stages is usually called \textit{algebraic modeling system} (AMS).
%we use algebraic modeling system (AMS) to implement the two stages. It allows users to 1) Modeling: define mathematical optimization models in a symbolic way without concrete data; 2) Model instantiation (conversion): when business data is available, instantiate the model with provided data to obtain a particular instance of the model in canonical form (e.g, obtain $A, b$ and $c$ of \eqref{eq:lp} from raw data of arbitrary form such as demand and supply table); 3) Model solving: get the solution of the particular model instance with the aid of external solver engines. In general, AMS bridges the gap between mathematical formulations and computing devices. % The history of AMS begins on early 1980's. Until now, leading AMSs include standalone software such as AMPL (commercial) and ZIMPL (open-source), as well as software packages like Pyomo (Python) and JuMP (Julia). 

In recent years, the scale and complexity of business problems are dramatically increased, e.g., a large electronics company can involve over $ 10^5 $ types of products and $ 10^2 $ plants worldwide with different standard. As a result, their corresponding optimization models become extremely massive and cumbersome. They not only involve millions of decision variables, but also contain extremely lengthy real-world business constraints, which cover numerous case of real business logic (e.g., production hierarchy, inventory control, supply-demand modeling, delay minimization, intra/inter-factory transshipment and replacement) thus can take hundreds of pages to document. While million-variable models can be a burden for solving stage, instantiating hundred-page business constraints efficiently as standard models is also highly nontrivial for conversion stage. For end-to-end optimization, both the two stages can be extremely time-consuming with traditional toolchain.

However, the information age calls for \textit{rapid optimization}. % the time limit for end-to-end optimization is even tightened rather than relaxed. Business decision making is urged to be more and more rapid in the information age, 
Only in this way can business decisions be frequently adjusted and updated to reflect the latest state of fast-changing market, and fulfill customers' increasing need for timeliness. The more rapidly getting optimized decisions from latest available data, the more timely the company can respond to market change and other uncertainty factors. %In this way, up-to-date business decisions help companies keep robust against external uncertainty factors. 
This especially applies to rolling horizon that a time-dependent model is solved repeatedly. E.g., a manufacturing company that can update its production plan in minutes (high re-planning periodicity) to handle unexpected urgent orders or factory stoppage is more competitive than its counterparts who have to wait for hours (low re-planning periodicity) for a new optimized plan. The same goes for other scenarios like logistics and finance, in which the timeliness of business decisions is directly related to user experience and profits, acting as a core competitiveness of modern enterprises. % the time consumption of the end-to-end optimization pipeline (including conversion and solving) is still strictly limited to meet end-user's deadline, imposing new and demanding requirements for scalability and efficiency of AMS. For example, the production planning problem for a large electronics company can involve over $ 10^5 $ items and $ 10^2 $ plants, to be optimized in a time horizon of several months, while satisfying dozens of types of complex business constraints. Mathematical optimization models in such scenarios usually contains millions of variables and constraints, which have to be converted and solved in minutes so that business operators can adjust the production plan iteratively.}

Moreover, rapid optimization creates remarkable possibilities for business intelligence. First, it can act as an analytical tool that provides metrics for high-level business decisions. E.g., buyers can evaluate different raw material purchase plans via running a ``simulated'' production planning model for each plan, and check their corresponding order fulfillment rates. % the criticality of raw materials can be evaluated via optimizing an individual production planning model for each raw material with its supply ``virtually'' reduced. 
Second, it can serve as a basis for larger or more complicated optimization. E.g., to strictly conserve order priority constraint (which is nonlinear), we can run production planning multiple times, where high-priority orders are planned before low-priority ones. All these build on the cornerstone that a single shot of end-to-end optimization can be very rapid.

% 2. rapid solving is an eye for different possibilities 3. basis for solving nonlinear problems

% However, current state-of-the-art AMSs failed to tackle such million-variable scenarios in both conversion and solving stages. For the conversion stage, they lack a principled approach to stress the efficiency of model instantiation. Only a few simple functions are provided to accelerate the instantiation process in very limited circumstances. As a result, \textcolor{blue}{the instantiation of complex business constraints is surprisingly time-consuming}. Some instantiation process can be even slower than their solving stage. For the solving stage, current AMSs mainly focus on lossless decomposition methods such as Benders decomposition rather than lossy decomposition. However, while theoretically sound, such lossless decomposition methods are not widely applicable since the special block structures they required are easily broken in complex real scenarios. Therefore, even if armed with strong solver engines, current AMSs still need minutes or even hours to instantiate a model and other several hours to solve the model in million-variable scenarios, far from meeting the current industrial demand. % We highlighted the model instantiation and solving stage as two main bottlenecks in \autoref{fig:MP_pipeline}.

\begin{figure}
	\centering
	\includegraphics[width=\linewidth]{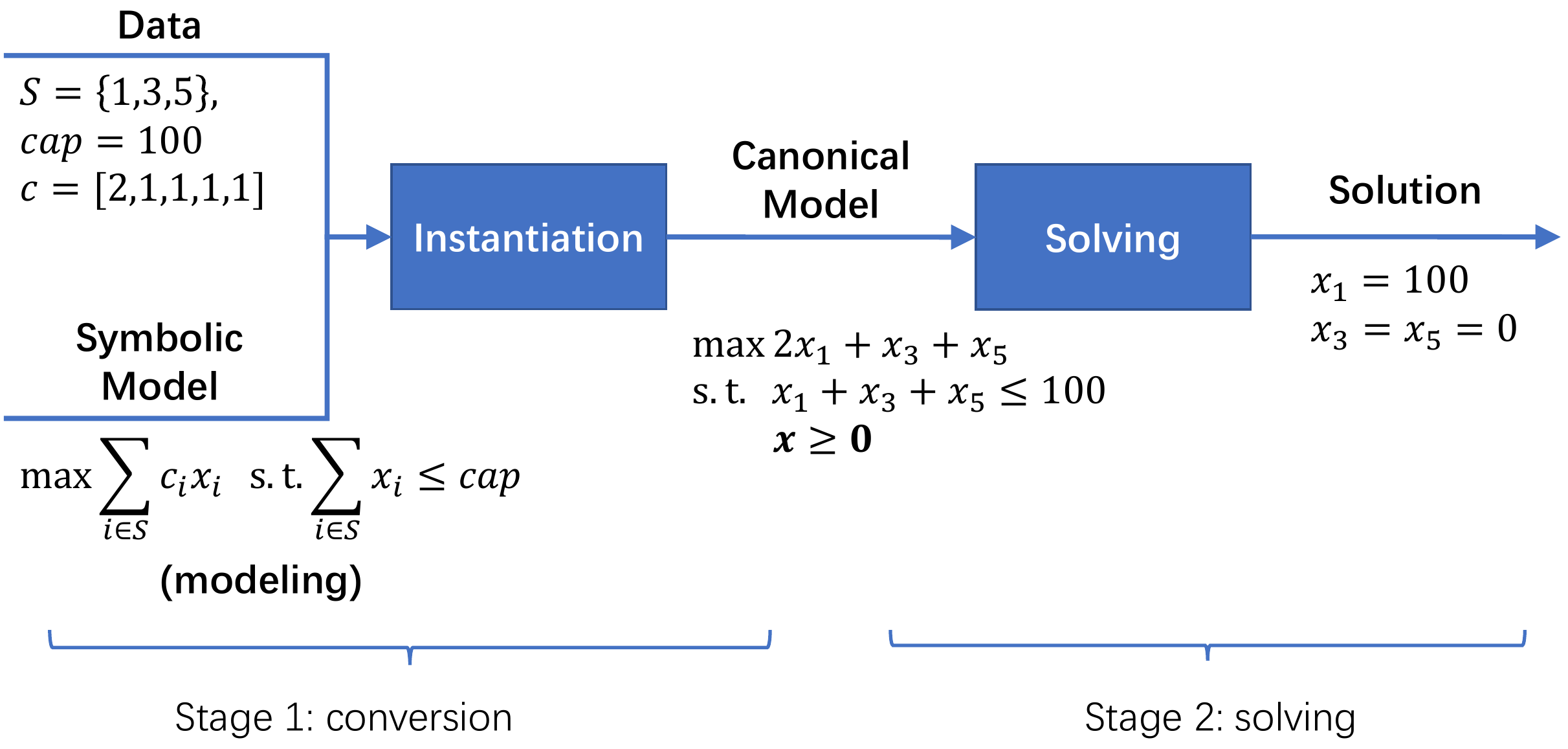}
	\vspace{-1em}
	\caption{A toy example of mathematical optimization pipeline for practical business decision making scenarios. % The main task is to minimize the end-to-end time cost from raw data (data center) to end users (operator).%taking production planning as an example. The reverse arrow from solution to data captures the multi-stage nature of some models.
	}
	\vspace{-1em}
	\label{fig:MP_pipeline}
\end{figure}

Sadly, current state-of-the-art AMSs are far from supporting the aforementioned ambition of rapid optimization. In \autoref{fig:MP_pipeline}, for the conversion stage, they lack a principled design to stress the efficiency of model instantiation, especially ignoring the parallelization and vectorization of operations% (as if using a triply-nested loop to compute matrix product)
. This issue is minor for normal-sized, simple models, but significantly raised as a major bottleneck for million-variable, hundred-page documented business scenarios. For the solving stage, they focus on lossless decomposition methods such as Benders decomposition, which are not widely applicable since the special block structures they required are easily broken in complex real scenarios. % they focus on exact optimal solution with high time cost, despite the fact that real-world applications barely need such an optimality precision. 
As a result, both two stages in \autoref{fig:MP_pipeline} are desperately time-consuming (typically several hours) in large-scale real scenarios, hindering companies from building highly-responsive decision systems against fast-changing markets.

To achieve the ambition of rapid end-to-end optimization, both the performance bottleneck of instantiation and solving must be removed. In this paper, we propose two methods that fully address the efficiency of the two stages respectively, and encapsulate them as a new AMS, \textit{Grassland}. For model instantiation, the motivation is to take advantage of both the sparsity of the data and the modern multiprocessor systems. While multiprocessor parallelism usually accompanies with an extra cost of communication and synchronization, we eliminate such a cost by a specially vectorized formulation that fully exploits the parallelism of data. In such a way we developed a model instantiation algorithm that is not only comparable with state-of-art AMSs in a single-threaded setting, but can also be accelerated in direct proportion to the number of processor cores. For model solving, we start from rolling horizon, a common business practice for decision making, to decompose a full model into a sequence of smaller models% that can be solved exponentially faster
. While such a decomposition can lead to a significant loss of global optimality, we propose a heuristic of additional ``aggregated master problem'' to capture global optimality, so as to minimize the loss. We also integrate our proposed heuristics with an existing one to further improve the performance.

Our major contributions can be summarized as follows:

\begin{itemize}
	\item Proposing a principled approach to stress the efficiency of model instantiation for optimization problems with linear constraints, by exploiting both the sparsity and parallelism of data, represented as a new algebraic modeling language with corresponding instantiation algorithms.% design and build the first fully parallel algebraic modeling system for general LP/MIP model instantiation. With both sparsity and parallelism into account, we achieve speedup in two orders of magnitude over other popular modeling softwares.
% 	\item To enhance the usability, we unify the elements of MP models as sparse tensors and symbolic expression graphs, and propose a JSON-based intermediate representation of MP models, bridging the gap between business data tables and mathematical formulation.
	\item Proposing Guided Rolling Horizon, a new decomposition heuristics that accelerates solving exponentially for mathematical optimization models with sequential structure with an acceptable loss of optimality, which can also work with other heuristics (Guided FRH) for better performance.
	\item Encapsulating our approaches as a new AMS, \textit{Grassland}, and conducting extensive experiments on both classical LP/MIP models and real enterprise production planning scenarios.
\end{itemize}

This system has been deployed in Huawei's supply chain management scenario, especially for production planning simulation of our planning department. By reducing the end-to-end optimization pipeline from hours to 3-5 minutes, Grassland achieves our ambition of rapid optimization in real industrial scenarios, playing an essential role in the production plan decision making of Huawei against highly dynamic supply-demand environment. %As a general AMS, Grassland is also being promoted for wide adoption.

\section{Related Work}

Algebraic modeling systems/languages is a relatively mature field starting from the late 1970s, with many sophisticatedly designed open-source and commercial software available. They can be divided into three categories. 1) Classical standalone systems with particular modeling language syntax, such as AMPL\cite{fourer1990modeling}, GAMS\cite{brook1988gams} and ZIMPL\cite{Koch2004}. They are usually more efficiently developed and easier to use for non-programmers; 2) Standalone modeling packages based on specific programming languages, such as YALMIP\cite{lofberg2004yalmip} for MATLAB, Pyomo\cite{hart2011pyomo} for Python, and JuMP\cite{LubinDunningIJOC} for Julia. %They are usually based on fully-fledged programming languages, making it easier to manipulate data, work with results and link optimization tasks into a larger project. 
Their modeling efficiency usually lies on the host language. 3) The modeling API as an integrated part of several solver engines, such as Gurobi\cite{gurobi2018gurobi} and Mosek\cite{mosek}, which usually provides solver-specific features. %Additionally, there are also some modeling systems that are specifically designed for machine learning related optimization problems, such as GENO\cite{GENO} and Gravity\cite{Gravity}. For more about AMS, \cite{kallrath_algebraic_2012} provides a comprehensive introduction. % Our system take both the advantage of 1) and 2) by providing easy-to-use Python modeling APIs, while the model will be translated into an intermediate representation so that it can be efficiently processed by high-performance c++ backend.

For the model instantiation process, while most of the aforementioned AMSs stress the universality of modeling, hardly any of them take the efficiency issue (e.g., parallelization and vectorization) into full consideration, especially for large-scale scenarios with millions of variables. One may argue that such efficiency issue could be leaved to practitioners. However, it is not practical for hundred-page documented complex optimization models to be manually analyzed line-by-line for instantiation efficiency, as if manual calculation of gradients is not practical for complex deep learning models (although neither of them contains theoretical difficulties!). For complex models, principled design must be given to achieve practical implementation efficiency. While machine learning communities benefit enormously from such principledly designed frameworks such as TensorFlow\cite{tensorflow} and PyTorch\cite{PyTorch} , industrial optimization practitioners call for an analogous framework that boost the instantiation of highly complex business optimization models. %A probable reason is that many of the AMSs mainly serve academic communities where models are offline solved thus the time cost of model instantiation is not so crucial. However, such performance issue is crucial for online application scenarios shown in \autoref{fig:MP_pipeline} focusing on end-to-end time cost. %Some commercial software provide better instantiation performance such as AMPL and Gurobi modeling API, in which some batch modeling operations are provided (e.g., allowing variables to be created in batches). However, they are still far from satisfactory in million-variable industry scenarios. 
% Newly emerged machine learning modeling frameworks such as TensorFlow\cite{tensorflow} and PyTorch\cite{PyTorch} highly stress the performance of model training/inference by applying techniques such as tensor-based computation, stateful dataflow graph expression and parallel operation. However, they are not introduced in AMS design before our work. % Before Grassland, to boost the performance, we have to develop customized model instantiation pipelines separately for lots of large-scale scenarios in Huawei, which is extremely exhausting in both development and maintenance.

For decomposition methods of large-scale model solving, most of the current literature focuses on lossless decomposition such as Benders decomposition or column generation\cite{BradleyStephenP.1977Amp}, in which the global optimality is guaranteed. However, such methods require a special block structure that is not easily satisfied in realistic complex scenarios. It is common that one or more types of constraints break the block structure, and blind use of decomposition methods will usually lead to performance degradation or even infeasibility. However, compared with feasibility, strict global optimality is not so crucial in most of the application scenarios. A fast-generated good solution is usually more appealing than a time-exhausting perfect one, which is the application foundation of heuristics methods. Existing lossy decomposition methods like forward rolling horizon \cite{DIMITRIADIS1997S1061} include a simple heuristics that aggregate future information to speedup the solving. In this way strict global optimality is slightly satisfied to exchange for more potential of acceleration. A detailed introduction is provided in \autoref{sec:FRH}.

\section{Efficient Instantiation of Linear Constraints in Optimization}

To design a rapid AMS, the first challenge is the efficiency of model instantiation for the emerging large-scale applied scenarios. In this section, we describe the model instantiation problem of mathematical optimization, and propose a general scheme to instantiate large-scale linear constraints efficiently, with both sparsity and parallelism taken into account.

\subsection{Preliminaries}\label{sec:preliminaries}

In this section, we give a brief introduction to mathematical optimization. A mathematical optimization (programming) problem can be represented in the following way: given a function $ f: S \rightarrow \mathbb{R} $ from a set $ S $ to real numbers, find an element $ x_0 \in S $ so that $ \forall x \in S $, $ f(x_0) \leq f(x) $ (minimization) or $ f(x_0) \geq f(x) $ (maximization). Here $ f $, $ S $ and $ x \in S $ are called objective function, feasible region and constraint respectively. 

For the majority of optimization types in practice that can be solved efficiently, $ S $ is a convex polytope. That is, to optimize an objective function subject to linear equality and inequality constraints, which can be expressed in canonical form as
\begin{align}
    \begin{split}\label{eq:lp}
    \min\quad & f(\textbf{x}) \\
    \text{subject to}\quad & \textbf{A}\textbf{x} = \textbf{b}, \textbf{x} \geq \textbf{0}
    \end{split}
\end{align}
in which $ A $ is called constraint matrix. $f(\textbf{x}) = \textbf{c}^T\textbf{x}$ for linear programming (LP) and $f(x) = \frac{1}{2}\textbf{x}^T\textbf{Q}\textbf{x} + \textbf{c}^T\textbf{x}$ for quadratic programming (QP). Mixed integer programming (MIP) adds additional integer constraints $x_i \in \mathbb{Z}, i \in Z$ on LP. Some special types of optimization contains nonlinear constraints such as conic and semidefinite optimization, which are out of this paper's scope.

Practically, while objective function $f(\textbf{x})$ describe the business goal which is only a single expression, constraint matrix $\textbf{A}$ can represent numerous business rules which contain millions of expressions that are much more time-consuming to instantiate. Therefore, we focus more on the efficiency of constraints instantiation in the following text.

\subsection{Problem description and challenges}

In this section, we describe the model instantiation problem of mathematical optimization, as well as its challenges in large-scale scenarios. 

\subsubsection{Problem description}

While solver engines take canonical mathematical optimization problems as input, end users rarely write canonical problems directly. Instead, they develop \textit{symbolic representations of models} in a human-readable language, which is called algebraic modeling language (AML). A symbolic representation of a model is a template of model without any concrete data. The place where concrete data should exist is represented by \textit{placeholders}. When a symbolic model needs to be solved with given data, both the data and AML-based symbolic model are fed into AMS. AMS will compile the symbolic model, and fill the model with concrete data to generate a canonical representation for the solver engine. We name this process as \textit{model instantiation} and conclude the input/output of this process as follows

\begin{description}
	\item[Input] (1) Symbolic representation of the model and (2) data.
	\item[Output] Canonical representation of the model. 
\end{description}

To show the procedure of model instantiation, we show an example of a simplified classical minimum-cost flow model. Consider a direct graph with a set $V$ of nodes and a set $E$ of edges, decision variable $ x_{i, j} $ represents the amount of current flowing from node $ i $ to node $ j $. $s_i$ is the supply/demand at each node $i$. For each node $i$, flow out $\sum_{\{j | (i, j) \in E\}} x_{i, j}$ minus flow in $\sum_{\{j | (j, i) \in E\}} x_{j, i}$ must equal the supply/demand $s_i$. Every flow corresponds to a cost $c_{i,j}$, and the model finds flows that minimize the total cost. The input/output for this model's instantiation is as follows

\setstacktabbedgap{5pt}

\textbf{Input:}

\begin{itemize}
    \item Symbolic representation of the model: \begin{align}\label{eq:network_flow}
        \min &\sum_{(i,j) \in E} c_{i,j}x_{i,j} \nonumber \\
        \text{subject to}\quad expr_i &= s_i, \quad \forall i \in V , \quad \textbf{x} \geq \textbf{0}\\
        \text{in which}\quad expr_i &= \sum_{\{j | (i, j) \in E\}} x_{i, j} - \sum_{\{j | (j, i) \in E\}} x_{j, i}\label{eq:expr_i}
    \end{align}\\
    in which $i$, $j$ are index placeholders denoting the index of expression $expr$ and variable $x$. $V$, $E$ and $S$ are data placeholders whose value need to be specified in model instantiation process.
    \item Data: $ V^* = \{1, 2, 3, 4\}, E^* = \{(1, 2), (2, 4), (1, 3), (3, 4)\}, s^* = [1, 0, 0, -1]$, $c^* = \textbf{1}$
\end{itemize}

\textbf{Output:}
\begin{itemize}
    \item Canonical representation of the model: \eqref{eq:lp} in which
    \begin{align*}
        \textbf{c} &= [0, 0, 0, 0, 1, 0, 0, 0, 1, 0, 0, 0, 0, 1, 1, 0]^T\\
        \textbf{A} &= \bracketMatrixstack{
    0 & 0 & 0 & 0 & 1 & 0 & 0 & 0 & 1 & 0 & 0 & 0 & 0 & 0 & 0 & 0 \cr
    0 & 0 & 0 & 0 & -1 & 0 & 0 & 0 & 0 & 0 & 0 & 0 & 0 & 1 & 0 & 0 \cr
    0 & 0 & 0 & 0 & 0 & 0 & 0 & 0 & -1 & 0 & 0 & 0 & 0 & 0 & 1 & 0 \cr
    0 & 0 & 0 & 0 & 0 & 0 & 0 & 0 & 0 & 0 & 0 & 0 & 0 & -1 & -1 & 0} \\
        \textbf{b} &= [1, 0, 0, -1]^T \\
        \textbf{x} &= [x_{1, 1}, \cdots, x_{4,1}, \cdots, x_{1, 4}, \cdots, x_{4, 4}]^T
    \end{align*}
\end{itemize}

Feed $\textbf{c}$, $\textbf{A}$ and $\textbf{b}$ into a LP solver engine and we can get the solution of decision variable $\textbf{x}$. We will discuss the example output's generation process later in \autoref{sec:instantiation_alg}.

\subsubsection{Challenges of model instantiation}

While model instantiation seems trivial from a theoretical perspective (it can be easily achieved in polynomial time compared with model solving), the challenge is that, the time complexity can still be extremely high if we directly follow the literal meaning of mathematical expressions to instantiate models, especially for constraints. Consider the following general equality constraint
\begin{equation*}
    expr_{i_1, \cdots, i_N} = s_{i_1, \cdots, i_n}, \quad \forall i_1 \in V_1, \cdots, i_N \in V_N
\end{equation*}
The time complexity of direct instantiation is $O(|V_1 \times \cdots \times V_N||expr|)$ in which $|expr|$ is the computation cost to instantiate a single expression. For example, the time complexity of constraint \eqref{eq:network_flow}'s direct instantiation is $ O(|V||E|) $ since by definition, we need to iterate every node $ i $ ($\forall i \in V$ in \eqref{eq:network_flow}), and search for edges whose heads or tails are equal to $ i $ (sum operation $\sum_{\{j | (i, j) \in E\}}$ and $\sum_{\{j | (j, i) \in E\}}$ in \eqref{eq:expr_i}). This is clearly unbearable if we have a sparse graph with millions of vertices and edges.

In application scenarios, such a lack of principled, efficient model instantiation scheme results in serious performance issues. A surprising fact is that model instantiation costs similar or even more time than model solving in many large-scale, complex enterprise scenarios. While ad-hoc solutions may exist for specific kind of problems\footnote{For example, an ad-hoc solution for instantiate Expression \ref{eq:expr_i}'s instantiation is to pre-index the edge by both head and tail node.}, our aim is to develop an AMS which is efficient for general linear model instantiation, whose model definition can be in arbitrary forms.

\subsection{A principled scheme of efficient linear constraint instantiation} \label{sec:model_instantiation}

In this section, we propose a principled scheme to instantiate linear constraints efficiently. The motivation is to design an AML with a corresponding instantiation algorithm that fully exploit the sparsity and parallelism of data. 

\begin{figure}
	\centering
	\includegraphics[width=\linewidth]{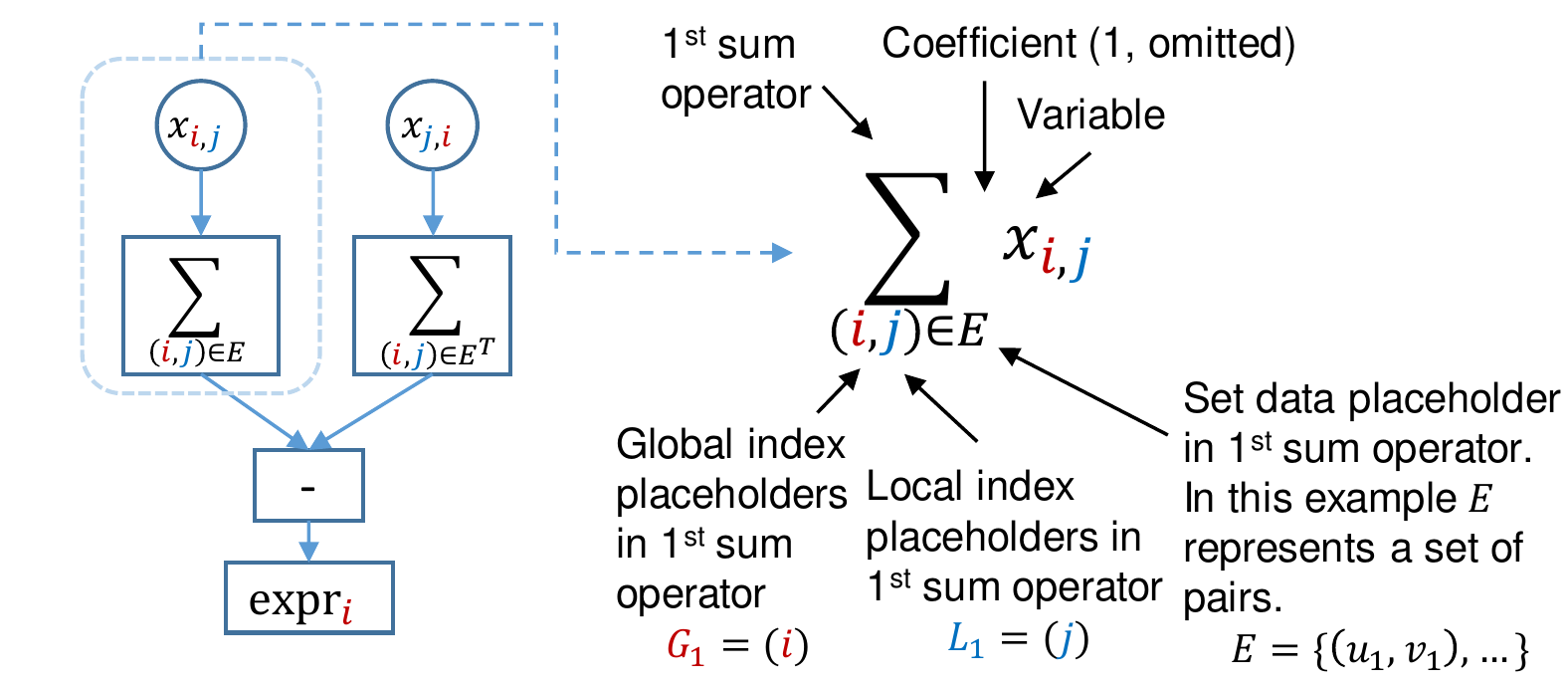}
	\vspace{-1em}
	\caption{The expression tree of expression \eqref{eq:expr_i}, with global index placeholder $ i $ and local index placeholder $ j $.}
	\vspace{-1em}
	\label{fig:multiindex_expression}
\end{figure}

\subsubsection{A new AML with symbolic multidimensional expression}

To represent a model symbolically with AML, the main task is to develop a symbolic representation of mathematical expressions in objective function and constraints. In our system, for a model with $K$ \textit{types} of constraints (e.g., the balance constraint \eqref{eq:expr_i} is one type of constraint. A typical applied model can contain dozens of types of constraints), we formulate the symbolic representation as the following format:
\begin{align}
    \begin{split}\label{eq:symbolic_model}
    \min \quad & expr^o_{G_o} \\
    \text{subject to}\quad & (expr^i_G, sign_i, rhs^i_G), 1 \leq i \leq K
    \end{split}
\end{align}
in which the $i$th type of constraint is formulated as a triple $ (expr_G, \allowbreak\text{sign}, \allowbreak rhs_G) $. That is, a symbolic expression $ expr_G $, a sign ($ = , \geq \text{or} \leq $), and a symbolic array $ rhs_G $. For example, \eqref{eq:network_flow} can be represented as a triple $(expr_G, =, s_G)$ in which $G = (i)$. Here we assume that all constants are moved to the right hand side of the equation, and omit the superscript $i$ of $ expr^i_G $ for simpler notation. 

% in which $ expr^o_{G_o} $ is the symbolic objective expression and $ (expr^i_G, \allowbreak\text{sign}_i, \allowbreak rhs^i_G) $ is the $i$th symbolic constraint. Under this formulation, only set data placeholder $ S $ in sum operator is needed be fed with concrete value. Therefore, the core of model instantiation is to obtain canonical representation of multidimensional expressions (e.g., constraint matrix $ A $ in LP) given symbolic representation $ expr_G $ and concrete set data $ S^* $.

It is important to notice that the symbolic expression $expr_G$ is \textit{multidimensional}. That is, it represents a list of indexed expressions instead of a single one. For example, $expr_i$ in \eqref{eq:expr_i} actually represents a list of expressions $[expr_1, \cdots, expr_{|V|}]$. Therefore, while a normal symbolic expression $expr$ can be represented as an expression tree whose non-leaf and leaf nodes are operators and terms respectively, A multidimensional expression should be added with additional \textit{index placeholders} $ G $ to denote the index of expressions, represented as $ expr_G $. It consists of the following elements

\begin{itemize}
    \item \textit{(Global) index placeholders} $ G = (g_1, \cdots, g_N) $, a tuple of one or more placeholders representing the index of expressions, defined at the root of the expression tree. Global index placeholders can appear anywhere in the expression tree. e.g., $G = (i)$ in \eqref{eq:expr_i}, and the index placeholder $i$ also appears in the logical condition of two sum operators in \eqref{eq:expr_i}. Here the bracket in $G = (i)$ indicates that $G$ is a tuple even if $N = 1$.
    \item \textit{Operators} (non-leaf nodes), receiving one or more nodes as operands. Some specific operators such as sum ($\sum$) include \textit{local index placeholders} that are only valid in the scope of these operators. E.g. the sum operator $\sum_{\{j | (i, j) \in E\}}$ in \eqref{eq:expr_i} receives $x_{i,j}$ as an operand, and include local index placeholder $j$ that is only valid in the scope of this sum operator. We will discuss the sum operator in detail later. 
    \item \textit{Terms} (leaf nodes), a variable with its corresponding coefficient. Their indices can be denoted by previously defined global and local index placeholders. e.g. $x_{i,j}$ in \eqref{eq:expr_i} is a term with variable $x_{i,j}$ and coefficient 1.
\end{itemize}
and a example is shown in the left part of \autoref{fig:multiindex_expression}.

In this work we mainly focus on the linear constrainted scenario that includes three operators: add ($+$), subtract ($-$) and sum ($\sum$). While the add and subtract operators are trivial, we discuss the sum operator in detail. In our formulation, we number all sum operators in a expression, and sum operators take a fixed format: the $ k $th sum operator is assigned to a symbolic logical condition $ (G_k \| L_k) \in S_k $.\footnote{Here we simplify the denotation $\{L_k | (G_k \| L_k) \in S_k\}$ as $ (G_k \| L_k) \in S_k $} ``$ \| $'' denotes the concatenation of tuples. For example, the 1st ($k = 1$) sum operator $\sum_{(i, j) \in E}$ in \eqref{eq:expr_i} contains a logical condition $(i, j) \in E$. It consists of

\begin{itemize}
    \item \textit{Global index placeholders} $ G_k = (g_{k_1}, g_{k_2}, \cdots) $, a tuple whose elements are the index placeholders from $ G $. e.g., $G_1 = (i)$ in $\sum_{(i, j) \in E}$.
    \item \textit{Local index placeholders} $ L_k = (l_1, l_2, \cdots) $, which is only valid in the scope of this sum operator. e.g., $L_1 = (j)$ in $\sum_{(i, j) \in E}$.
    \item \textit{Data placeholder} $ S_k $ : A symbolic placeholder representing a set of fixed-sized index tuples. The size of each tuple is $ |G_k| + |L_k| $. e.g., edge set placeholder $E$ in $\sum_{(i, j) \in E}$, representing a set of direct edges (i.e., pair of node indices). % It can also be regarded as a symbolic sparse tensor with dimension $ |G_i| + |L_i| $.
\end{itemize}

An example is shown in the right part of \autoref{fig:multiindex_expression}.

% \begin{table}
% 	\centering
% 	\begin{tabular}{c|c}
% 		\hline 
% 		Symbol & Meaning \\
% 		\hline 
% 		$g$ / $l$ & global / local index placeholders \\
% 		\hline 
% 		$expr_G$ / $G$ & {\makecell{A multidimensional symbolic expression with its\\global index placeholders. e.g., $expr_i$ in \eqref{eq:expr_i}, $G = (i)$}} \\
% 		\hline 
% 		$G_k$ / $L_k$ & {\makecell{tuple of global / local index placeholders for the\\$k$th sum operator. e.g., $G_k = (g_{k_1}, g_{k_2})$, $L_k = (l_{k_1})$}} \\
% 		\hline 
% 		$S_i$ & {\makecell{set data placeholders for the $i$th sum operator}} \\
% 		\hline 
% 		{\makecell{$g^*$ / $l^*$ / $G^*_i$ /\\$L^*_i$ / $S^*_i$}} & {\makecell{concrete value of placeholders.\\ e.g., $S^*_i = \{(1, 2, 3), (4, 5, 6)\}$, $G^*_i = (1, 2)$, $L^*_i = (3)$}} \\
% 		\hline 
% 		$\text{space}(G)$ & {\makecell{all possible combinations of $G$\\e.g. $\text{space}(G_i) = \{(1, 1), (1, 2), \cdots, (10000, 10000)\}$}}  \\
% 		\hline 
% 		$\|$ & concatenation of tuples. e.g., $G_i \| L_i = (g_{i_1}, g_{i_2}, l_{i_1})$\\
% 		\hline 
% 	\end{tabular} 
% 	\caption{Symbol list}
% 	\label{tab:symbol_list}
% \end{table}

\subsubsection{An efficient model instantiation algorithm}\label{sec:instantiation_alg}

To instantiate a multidimensional expression $ expr_G $ given data $ S^* $, a simple way is to enumerate all possible combinations of global index placeholders $ G $ (denoted as $ \text{space}(G) $), and traverse through the expression tree for each combination to generate every single algebraic expression. This is usually the literal meaning of mathematical expressions. The detailed process is illustrated in \autoref{alg:exhaustive_model_instantiation}. For example, to instantiate \eqref{eq:expr_i} with given data $V^*$ and $E^*$, this expression's index placeholders is $G = (i)$, and all possible values of $i$ are $\text{space}(G) = \{1, 2, 3, 4\}$. Then we iterate all elements of $\text{space}(G)$. e.g., when $i = 1$, we iterate $E$ to generate sub-expression $\sum_{(1, j) \in E} x_{1, j} = x_{1,2} + x_{1,3}$ and $\sum_{(j, 1) \in E} x_{j, 1} = 0$, then the expression will be $expr_1 = x_{1, 2} + x_{1, 3}$. In a similar way we get $expr_2 = x_{2, 4} - x_{1, 2}, expr_3 = x_{3, 4} - x_{1, 3}, expr_4 = -x_{2, 4} - x_{3, 4}$.

However, this approach is extremely exhaustive with time complexity $ O(|\text{space}(G)||S^*|) $. In this section, we propose an efficient model instantiation algorithm based on the AML proposed in previous section, whose result is identical to \autoref{alg:exhaustive_model_instantiation} but with $ |O(S^*)| $ time complexity.

\begin{algorithm}
    \footnotesize
	\caption{Exhaustive model instantiation algorithm}
	\label{alg:exhaustive_model_instantiation}
	\begin{algorithmic}[1]
		\Require Symbolic multidimensional expression $ expr_G $, set data $ S^* $
		\Ensure Constraint matrix $ A $
		\State Initialize $A$ as an empty matrix, with number of columns equal to number of variables.
		\For {$ G^* \in \text{space}(G) $}
			\State Generate $expr^*_G$ by replacing symbolic index placeholders $ G $ and data placeholder $ S $ with concrete value $ G^* $ and $ S^* $ respectively in all nodes of $ expr_G $
			\State Do an in-order traversal to $expr^*_G$ to generate the algebraic expression. When visiting the $i$th sum operator, traverse through corresponding set data $S_i^*$ in its logical condition.
			\State Add one row to $ A $ with the generated expression.
		\EndFor\\
		\Return $ A $
	\end{algorithmic}
\end{algorithm}

\begin{lemma}\label{thm:lemma1}
	For expression tree of $ expr_G $, without loss of generality, we assume that every path from a leaf node to the root will go through at least one sum operator.
\end{lemma}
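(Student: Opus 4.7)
The plan is to show that any expression tree can be transformed into an equivalent one in which every leaf-to-root path traverses at least one sum operator, so that assuming this structure loses no generality.

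First, I would characterize the obstruction: a leaf whose path to the root consists entirely of add/subtract operators. Since only sum operators introduce local index placeholders, the indices on such a leaf term must be drawn exclusively from the global placeholders $G = (g_1, \ldots, g_N)$ declared at the root (local placeholders from sibling sum subtrees are out of scope by the scoping rules stated just above the lemma). So every ``offending'' leaf has the form $c \cdot x_{g_{i_1}, \ldots, g_{i_M}}$ with all indices in $G$.

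Next I would describe the transformation: replace each such leaf $c \cdot x_{g_{i_1},\ldots,g_{i_M}}$ by a degenerate sum
\[
\sum_{(G_k \,\|\, L_k)\in S_k} c \cdot x_{g_{i_1},\ldots,g_{i_M}},
\]
where I pick $L_k = (l)$ for a fresh local placeholder $l$ not appearing in the term, $G_k$ to be the (possibly empty) tuple of global placeholders on which the leaf depends, and $S_k$ to be the singleton data $\{(g_{i_1},\ldots,g_{i_M}, *)\}$ for a dummy value $*$ (or, more simply, the trivial set $\{*\}$ with $G_k = ()$). This sum matches the fixed format $(G_k \| L_k)\in S_k$ required by the AML, its data placeholder $S_k$ has arity $|G_k| + |L_k|$ as required, and because the index set is a singleton whose unique element agrees with the outer indices on $G_k$, the sum evaluates to exactly the original term for every instantiation of $G$. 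Thus the semantic value of $expr_G$ on any data is unchanged.

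Finally, I would observe that the new leaf now sits under a sum operator, so its root path contains a sum, while all previously compliant leaves are untouched. Applying this rewrite to each offending leaf in turn yields an equivalent expression tree satisfying the stated property. The only mildly delicate point, and the one I would be most careful about, is confirming that the degenerate sum is a legal object under the format $(G_k\|L_k)\in S_k$ defined in the previous subsection (singleton $S_k$, fresh $l$, correct arity); once that bookkeeping is done, the equivalence and the ``without loss of generality'' clause follow immediately.
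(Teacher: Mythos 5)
Your overall strategy---insert a degenerate sum operator above each offending leaf so that its root path acquires a sum---is exactly the paper's approach. But the concrete construction of the dummy sum's data set does not work as written. In this formalism $S_k$ is a \emph{data} placeholder: at instantiation time it is bound to a fixed, concrete set of index tuples supplied as input, independent of which value $G^*$ of the global placeholders is currently being enumerated. Your set $\{(g_{i_1},\ldots,g_{i_M},*)\}$ contains symbolic placeholders, so it is not a legal concrete data set; and if you instead read it as a literal one-element set of concrete indices, then for every $G^*$ whose restriction to $G_k$ differs from that single tuple the sum is empty and the term is dropped, which changes the output of Algorithm 1. A set that ``agrees with the outer indices on $G_k$ for every instantiation of $G$'' cannot be a singleton---it must contain one tuple per element of $\mathrm{space}(G_k)$. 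That is precisely what the paper does: it inserts $\sum_{G \in S_i}$ with $S_i^* = \mathrm{space}(G)$ and \emph{no} local placeholders, so that for each fixed $G^*$ exactly one tuple survives and the term is emitted exactly once.

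Your fallback variant ($G_k = ()$, $S_k = \{*\}$ with a fresh local placeholder) is legal and does preserve the output of Algorithm 1, so it technically discharges this lemma in isolation. However, it defeats the purpose the lemma serves downstream: in Algorithm 2 the dummy sum must contribute the mapping $G \to G^*$ to the context so the resulting term can be tagged with its constraint row, and a sum with $G_k = ()$ contributes nothing. You would then be relying entirely on the transformation in the next lemma to expand $G_k$ back up to $G$, which---after that expansion---reproduces the paper's $\mathrm{space}(G)$ construction anyway. The clean fix is to take the dummy sum's global tuple to be all of $G$ and its data to be $\mathrm{space}(G)$ from the start.
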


\begin{proof}
	For the leaf node whose path to the root does not include any sum operator, we can insert a ``dummy'' sum operator $ \sum_{G \in S_i} $ before the node with set data $ S_i^* = \text{space}(G) $. This operator does not contain any local index placeholders so will not change the result of \autoref{alg:exhaustive_model_instantiation}.
\end{proof}

\begin{lemma}\label{thm:lemma2}
	Let $ I(node) = \{i | i\text{th sum operator is on the path} \allowbreak \text{ from} \allowbreak \text{node to root}\} $. Without loss of generality, we assume that for every leaf node of the expression tree, $ \{g | g \in G_j, j \in I(node)\} = G $.
\end{lemma}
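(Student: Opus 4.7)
The plan is to mirror the construction used in Lemma 1, patching the expression tree so that every global index placeholder is accounted for by some sum operator on every leaf-to-root path, while leaving the instantiation result untouched.

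For any leaf node at which the stated equality fails, let $G' = G \setminus \{g \mid g \in G_j,\ j \in I(node)\}$ be the (nonempty) tuple of global index placeholders that never appear in any sum operator on the path from the leaf to the root. I would insert a single dummy sum operator immediately above this leaf: a new sum (call its index $k'$) with global index placeholders $G_{k'} = G'$, empty local index tuple $L_{k'} = ()$, and data placeholder instantiated to $S_{k'}^{*} = \text{space}(G')$. The tuple-size condition $|G_{k'}| + |L_{k'}| = |G'|$ matches the cardinality of tuples in $\text{space}(G')$, so the sum $\sum_{(G_{k'} \| L_{k'}) \in S_{k'}}$ is a syntactically valid operator under the format fixed in Section 3.3.1.

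Next I would verify that this modification is semantically invisible to \autoref{alg:exhaustive_model_instantiation}. For any concrete assignment $G^* \in \text{space}(G)$ visited by the outer loop, the logical condition of the new sum ranges over tuples of the form $G^{*}|_{G'}$ (the projection of $G^{*}$ onto the coordinates in $G'$); since $S_{k'}^{*} = \text{space}(G')$ contains every such tuple exactly once and $L_{k'}$ is empty, the sum collapses to a single copy of the child leaf expression. Hence the generated algebraic expression, and therefore the row added to $A$, is unchanged. After the insertion we have $k' \in I(node)$ with $G_{k'} = G'$, so at this leaf the union $\{g \mid g \in G_j,\ j \in I(node)\}$ equals $G$, as required. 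The dummy sums also preserve Lemma 1, since each such sum only strengthens the property that every leaf has at least one sum on its path to the root.

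Repeating the construction at each of the finitely many offending leaves yields a modified expression tree satisfying the claim. The main subtlety I would be careful about is that the insertion is compatible with the fixed sum-numbering scheme and the indexing later used by the efficient instantiation algorithm of Section 3.3.2; once $S_{k'}^{*}$ is chosen as $\text{space}(G')$ and $L_{k'}$ is empty, the required properties follow immediately, so I do not anticipate a deeper obstacle.
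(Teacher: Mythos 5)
Your proof is correct, but it uses a genuinely different construction from the paper's. Where you insert a brand-new dummy sum $\sum_{G' \in \text{space}(G')}$ immediately above each offending leaf (reusing the device from the proof of \autoref{thm:lemma1}), the paper instead selects an \emph{existing} sum operator on the leaf-to-root path (one exists by \autoref{thm:lemma1}), appends the missing placeholder $g'$ to that operator's global index tuple $G_i$, and replaces its set data $S_i^*$ by the Cartesian product of the original tuples with $\text{space}(g')$; since the enlarged condition merely enumerates every value of $g'$ for each original tuple while the outer loop of \autoref{alg:exhaustive_model_instantiation} pins $g'$ to a single value, the generated rows are unchanged. Both constructions are valid and incur the same inherent blow-up of $|\text{space}(G')|$, which is unavoidable because the leaf genuinely occurs in that many constraint rows. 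Yours is arguably cleaner: it handles all missing placeholders of a leaf at once, is strictly local to the offending leaf, and trivially preserves \autoref{thm:lemma1}; the paper's variant has the minor advantage of leaving the tree topology and the sum-operator numbering untouched, modifying only the data attached to an operator that is already there. Your invariance check --- that the inserted sum collapses to a single copy of its child because $L_{k'}$ is empty and $\text{space}(G')$ contains each projection of $G^*$ exactly once --- is exactly the argument needed, and it transfers to \autoref{alg:efficient_model_instantiation} as you note, since the unbound placeholders in $G'$ never appear in the context and so the dummy sum enumerates all of $\text{space}(G')$ there as required.
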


\begin{proof}
	From \autoref{thm:lemma1} we know that $I(node) \neq \emptyset$. If $ \exists g' \in G $ so that $ g' \notin \{g | g \in G_j, j \in I(node)\} $, we select one sum operator $ \sum_{(G_i \| L_i) \in S_i} $ on $ I(node) $ and expand $ G_i $ to $ G_i \| g' $. For set data $ S^*_i $, we replace each tuple data $ (g^*_{i_1}, \cdots, g^*_{i_N}, l^*_{i_1}, \cdots) $ to a set of expanded tuple $ \{(g^*_{i_1}, \cdots, g^*_{i_N}, g^*_j, l^*_{i_1}, \cdots) | g^*_j \in \text{space}(g_j)\} $. In this way we enumerate all possible value of $ g_j $ for every tuple data in $ S^*_i $, so will not change the result of \autoref{alg:exhaustive_model_instantiation}.
\end{proof}

With \autoref{thm:lemma1} and \autoref{thm:lemma2}, we propose a model instantiation algorithm. Different from \autoref{alg:exhaustive_model_instantiation} that fixes the value of all global index placeholders and traverses the expression tree for $|\text{space}(G)|$ times, this algorithm traverses the expression tree only once, and records the corresponding value of index placeholders dynamically as ``context information'' when traverse through concrete set data $ S^*_i $ of $i$th sum operator. When the leaf node (term) is reached, all the index placeholders in the term is replaced by the actual value recorded in the context information. Meanwhile, the actual value of all global index placeholders $G$ in the context is snapshotted and attached to the index-replaced term. When the traverse process is finished, we aggregate terms with the same global index. The detailed algorithm is shown in \autoref{alg:efficient_model_instantiation}.

For example, to instantiate \eqref{eq:expr_i}, we traverse the expression tree of \eqref{eq:expr_i} in \autoref{fig:multiindex_expression}. When we arrive at the first sum operation $\sum_{(i, j) \in E}$, we iterate the value $(i, j) \in E$ and record the context information (e.g., record $i = 1, j = 2$ for the first edge). When we reach the leaf node $x_{i, j}$, we replace the index placeholder with the corresponding value recorded in context information (e.g., we get $x_{1,2}$), and attach the actual value of all global index placeholders to the term (e.g., attach $i = 1$ to $x_{1,2}$, represented as $(1, x_{1,2})$). When we finish the traverse process, we will get $(1, x_{1,2}),\allowbreak (2, x_{2,4}), \allowbreak(1, x_{1,3}), \allowbreak(3, x_{3,4}), \allowbreak(2, -x_{1,2}), \allowbreak(4, -x_{2,4}), \allowbreak(3, -x_{1,3}), \allowbreak(4, -x_{3,4})$. By aggregating terms with the same global index, we will get the same result as \autoref{alg:exhaustive_model_instantiation}.

\begin{algorithm}
    \footnotesize
	\caption{Efficient model instantiation algorithm}
	\label{alg:efficient_model_instantiation}
	\begin{algorithmic}[1]
		\Require Symbolic multidimensional expression $ expr_G $, set data $ S^* $
		\Ensure Constraint matrix $ A $
		\Procedure{Replace}{node, context}  // node is a leaf node (term)
		    \State Replace the index placeholders of coefficient and variables with concrete values in context, and return the replaced term
		\EndProcedure
		\Procedure{Replace}{$G$, context} // G is a tuple of placeholders
		    \State Replace the global index placeholders in $G$ with concrete values in context, and return
		\EndProcedure
		\Procedure{Iterate}{node, context}
		    \If{node is a leaf node}
		        \State terms $\leftarrow$ (\Call{Replace}{node, context}, \Call{Replace}{$G$, context})
		    \ElsIf{node is the $i$th sum operator}
		        \State terms $\leftarrow$ emply list
		        \State Retrieve the concrete data $S^*_i$ of the $i$th sum operator from $S^*$ 
		        \State Filter all $(g^*_1, \cdots, g^*_N) \in S^*_i$ with the condition that $g^*_j = \text{context}(g_j)$ if $g_j$ appears in the mapping key of the context
		        % \State $S^*_i \leftarrow \{(g^*_1, \cdots, g^*_N) | (g^*_1, \cdots, g^*_N) \in S^*_i, g^*_j = \text{context}(g_j) \forall g_j \in \text{context} \cap \{g^*_1, \cdots, g^*_N\}\}$
		        \For{$ (G^*_i, L^*_i) \in S^*_i $}
		            \State context' $\leftarrow$ \Call{AddMapping}{context, $G_i \rightarrow G^*_i$, $L_i \rightarrow L^*_i$}
		            \State terms $\leftarrow$ terms $\|$ \Call{Iterate}{child, context'}
		        \EndFor
		    \ElsIf{node is an add operator}
		        \State terms $\leftarrow$ \Call{Iterate}{left, context} $\|$ \Call{Iterate}{right, context}
		    \ElsIf{node is a sub operator}
		        \State terms $\leftarrow$ \Call{Iterate}{left, context} $\|$ $-$\Call{Iterate}{right, context}
		    \EndIf
		    \State \textbf{return} terms
		\EndProcedure
		\State Initialize $A$ as an empty matrix sized $|constraints|\times|variables|$. 
		\State Initialize context as an empty mapping from symbolic index placeholder to concrete value.
		\State terms $\leftarrow$ \Call{Iterate}{root node of $expr_G$, context}
		\For{(term, $G^*) \in $terms}
		    \State Map $G^*$ and variable to matrix index $row$ and $col$
		    \State $A[row, col] = \text{coefficient of variable}$
		\EndFor
		\State \textbf{return} $ A $
	\end{algorithmic}
\end{algorithm}

\begin{proposition}\label{thm:prop1}
	The outputs of \autoref{alg:exhaustive_model_instantiation} and \autoref{alg:efficient_model_instantiation} are identical given the same input.
\end{proposition}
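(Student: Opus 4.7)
The plan is to establish a term-by-term correspondence between the outputs of the two algorithms, showing that each nonzero entry of $A$ produced by \autoref{alg:exhaustive_model_instantiation} is produced exactly once by \autoref{alg:efficient_model_instantiation} with the same $(\text{row}, \text{col})$ coordinate and the same coefficient, and conversely. By \autoref{thm:lemma1} and \autoref{thm:lemma2}, we may assume every leaf node of $expr_G$ lies under at least one sum operator and that the union of global index placeholders $G_j$ over all sum operators on any leaf-to-root path equals $G$. This guarantees that when \autoref{alg:efficient_model_instantiation} reaches a leaf node, the context already binds every element of $G$ to a concrete value, so the snapshot of $G$ produced there is a well-defined element of $\text{space}(G)$; without the two lemmas this invariant could fail, so they are essential preconditions for the bijection argument.

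First I would reformulate each algorithm as producing a multiset of pairs $(t, G^*)$, where $t$ is a signed (variable, coefficient) term and $G^*$ is its attached global index tuple; the matrix $A$ is then filled by writing the coefficient of $t$ into $A[\text{row}(G^*), \text{col}(t)]$. For \autoref{alg:exhaustive_model_instantiation}, the pairs produced for a fixed $G^*$ are exactly those generated by the in-order traversal of $expr^*_G$ after substituting $G \to G^*$, where the nested iterations over each $S_i^*$ are implicitly restricted to tuples whose $G_i$-components agree with $G^*$. For \autoref{alg:efficient_model_instantiation}, the pairs are produced in a single traversal that, at the $i$th sum operator, iterates over $S_i^*$ filtered by the current context, and at each leaf emits one term together with the snapshot of $G$ in context.

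The central step is to exhibit an explicit bijection between the execution traces. Given any pair $(t, G^*)$ emitted by \autoref{alg:efficient_model_instantiation}, its context at the leaf records, for every sum operator on the root-to-leaf path, a chosen tuple from $S_i^*$ consistent with whichever components of $G^*$ have already been fixed; this is precisely the branch that \autoref{alg:exhaustive_model_instantiation} explores when it runs with $G$ fixed to $G^*$. Conversely, every leaf-reaching branch of \autoref{alg:exhaustive_model_instantiation}'s traversal for a given $G^*$ corresponds to a unique execution path in \autoref{alg:efficient_model_instantiation} whose context acquires exactly those bindings. I would formalize this by structural induction on the expression tree, with cases for add and subtract (which merge sub-multisets, with subtract negating coefficients), for sum (where Algorithm 2's filter on $g_j = \text{context}(g_j)$ matches Algorithm 1's external fixing of the corresponding components of $G^*$), and for leaves (where both algorithms substitute identical values for every placeholder and emit the same signed term).

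The main obstacle will be the filtering step at sum operators. Under \autoref{thm:lemma2}, the same global placeholder $g$ may be introduced by more than one sum operator on a single root-to-leaf path, in which case \autoref{alg:exhaustive_model_instantiation} fixes $g$ at the outset, while \autoref{alg:efficient_model_instantiation} binds $g$ only at the first sum operator that mentions it and then filters subsequent operators to be consistent with that binding. Verifying that these two mechanisms enumerate the same collection of leaf-reaching branches, with the same multiplicity, is the crux. I would address this by strengthening the inductive hypothesis so that the statement is about equality of multisets of $(t, G^*)$ pairs generated from a subtree under an arbitrary partial context $C$ on the efficient side, and the corresponding outer-loop restriction $G^* \supseteq C$ on the exhaustive side. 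Once the bijection is established, the final aggregation into entries of $A$ is immediate, giving equality of the two outputs.
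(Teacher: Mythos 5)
Your proposal is correct and follows essentially the same route as the paper's own proof: both directions rest on \autoref{thm:lemma1} and \autoref{thm:lemma2}, and the core of the argument is matching, for each emitted term, the sequence of tuples chosen from the $S_i^*$ at the sum operators along its root-to-leaf path, which is exactly the sequence-transfer the paper performs. Your version is somewhat more careful than the paper's terse argument --- you track multiplicities as multisets and explicitly verify that the context filter at nested sum operators reproduces \autoref{alg:exhaustive_model_instantiation}'s up-front fixing of $G^*$ --- but the underlying correspondence is the same.
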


\begin{proof}
    To simplify the demonstration we omit the coefficient in all terms, which can be treated similarly to the variables.
    
    $\Rightarrow$: In \autoref{alg:exhaustive_model_instantiation}, assume there is a variable $x_{G^*_x, L^*_x}$ in expression $expr_{G^*}$, from \autoref{thm:lemma2} we know that the union of all $G_i$ in $I(x_{G^*_x, L^*_x})$ equals to $G$. Without loss of generality we let $I(x_{G^*_x, L^*_x}) = 1, \cdots, M$, then follow the depth-first iteration of \autoref{alg:exhaustive_model_instantiation}, we can find a sequence $(G^*_1 \| L^*_1 \in S^*_1, \cdots, G^*_M \| L^*_M \in S^*_M)$ so that $\bigcup_{i=1}^M G_i = G$ and $\bigcup_{i=1}^M G^*_i = G^*$. For \autoref{alg:efficient_model_instantiation}, we can follow the same sequence in the depth-first iteration and accumulate the mapping $G_i \rightarrow G^*_i$, $L_i \rightarrow L^*_i$ in context. Therefore when the leaf node is finally reached, the context will contain $G \rightarrow G^*$, then by the \textsc{Replace} procedure we get variable $x_{G^*_x, L^*_x}$ in expression $expr_{G^*}$ in \autoref{alg:efficient_model_instantiation}.
    
    $\Leftarrow$: In \autoref{alg:efficient_model_instantiation} assume there is a variable $x_{G^*_x, L^*_x}$ in expression $expr_{G^*}$, the context information stores mapping $G \rightarrow G^*$ when \textsc{Iterate} reached the leaf node, then following \autoref{thm:lemma2} we also have a sequence $(G^*_1 \| L^*_1 \in S^*_1, \cdots, G^*_M \| L^*_M \in S^*_M)$ so that $\bigcup_{i=1}^M G_i = G$ and $\bigcup_{i=1}^M G^*_i = G^*$. Thus when $expr_G$ is fixed into $expr_{G^*}$ in \autoref{alg:exhaustive_model_instantiation}, we can follow the same sequence and get variable $x_{G^*_x, L^*_x}$ in expression $expr_{G^*}$.
\end{proof}

\subsubsection{Parallelization of the model instantiation algorithm} \label{sec:parallelization}

While \autoref{alg:exhaustive_model_instantiation} is extremely exhaustive, it is easy to be paralleled by simply letting each worker instantiate a partition of concrete index set $\text{space}(G)$. In this section we show that \autoref{alg:efficient_model_instantiation} can also be fully paralleled by transiting index partition to data partition.

Given a set data $S^*$, its corresponding index placeholders $G$ and a partition of all possible values of $k$th index placeholder $g_k \in G$ (denoted as $\{P_1, \cdots, P_M\}, \bigcup_{i}P_i = \text{space}(g_i)$), a data partition of set data $S^*$ over $k$th index $g_k$ is to partition $S^*$ to $M$ subset $S^*_1, \cdots, S^*_M$, so that $S^*_i = \{(g^*_1, \cdots, g^*_N) | \allowbreak (g^*_1, \cdots, g^*_N) \in S^*, g^*_k \in P_i\}$. That is, the $i$th subset of the data only contains the index tuple whose $k$th element is in $i$th partitioned index subset $P_i$. An example is shown in \autoref{fig:data_partition}.

\begin{figure}
	\centering
	\includegraphics[width=0.5\linewidth]{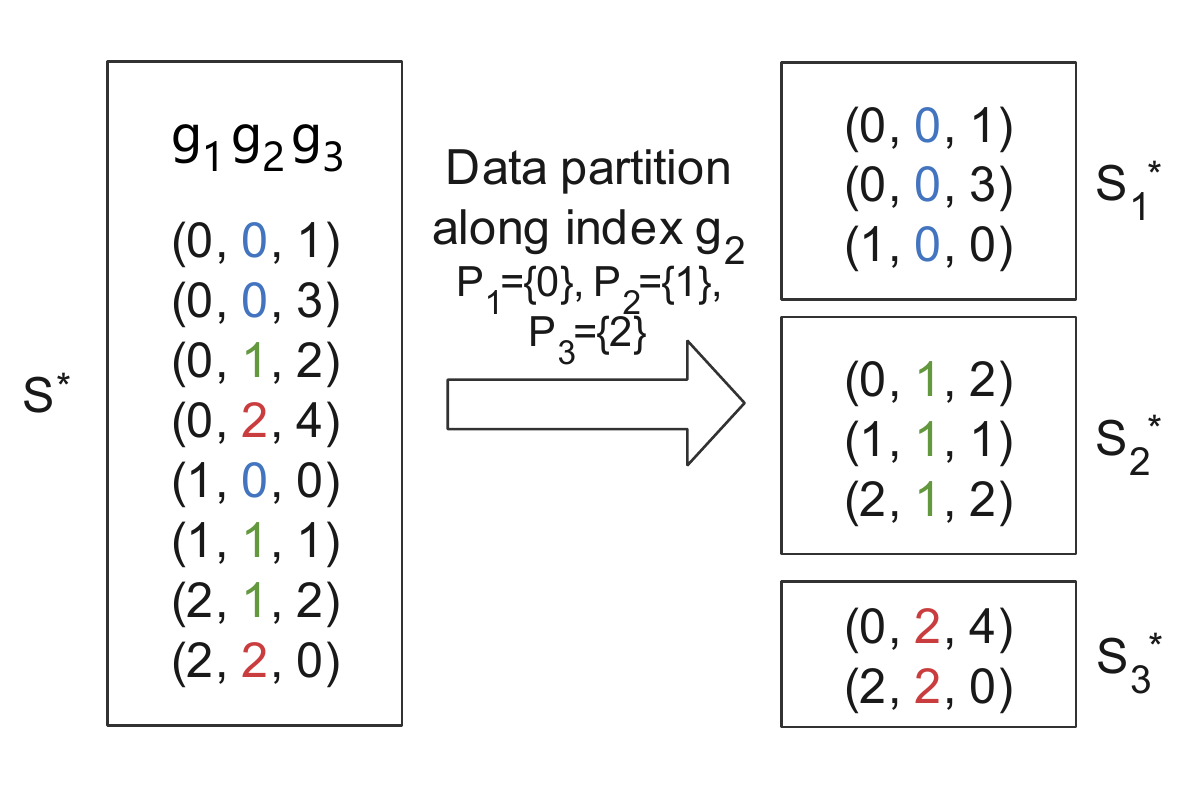}
	\vspace{-1em}
	\caption{An example of data partition.}
	\vspace{-1em}
	\label{fig:data_partition}
\end{figure}

Then we have the following proposition:

\begin{proposition}\label{thm:prop2}
	Running \autoref{alg:exhaustive_model_instantiation} on a subset of all possible global index value $(g^*_1, \cdots, g^*_N) \in \text{space}(G)$ with condition $g^*_k \in P$, is equivalent to running \autoref{alg:efficient_model_instantiation} on a subset of data $(g^*_1, \cdots, g^*_N) \in S^*$ with condition $g^*_k \in P$.
\end{proposition}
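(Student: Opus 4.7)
The plan is to reduce the claim to \autoref{thm:prop1} by tracking how the restriction $g^*_k \in P$ propagates through the two algorithms. The key structural input is \autoref{thm:lemma2}: on every root-to-leaf path of the expression tree, the global-index placeholders of the sum operators encountered jointly cover all of $G$, so in particular $g_k$ appears in at least one $G_j$ along that path. Consequently, each term surviving in the output of either algorithm carries a well-defined concrete value $g^*_k$, supplied by exactly one of the data sets $S^*_j$ whose global index tuple $G_j$ contains $g_k$.

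First I would make the data restriction precise: let $\tilde{S}^*$ be obtained from $S^*$ by removing, from every $S^*_j$ with $g_k \in G_j$, all tuples whose $g_k$-coordinate lies outside $P$. For the forward direction, suppose \autoref{alg:exhaustive_model_instantiation} produces a term in $expr_{G^*}$ with $g^*_k \in P$. By \autoref{thm:prop1}, the same term appears in the output of \autoref{alg:efficient_model_instantiation} run on the full $S^*$; but the iteration path producing it passes through sum operators whose contexts bind $g_k$ to $g^*_k$, and all such bindings come from tuples that survive the filter to $\tilde{S}^*$. Hence the same term is still produced by \autoref{alg:efficient_model_instantiation} on $\tilde{S}^*$. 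The backward direction is symmetric: any term produced by \autoref{alg:efficient_model_instantiation} on $\tilde{S}^*$ has its context mapping $g_k \mapsto g^*_k$ drawn from a surviving tuple, forcing $g^*_k \in P$, and \autoref{thm:prop1} then matches it to a term produced by \autoref{alg:exhaustive_model_instantiation} under the index restriction $g^*_k \in P$.

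The main obstacle is handling the case in which $g_k$ appears in more than one sum operator's global index tuple, because the context built along a root-to-leaf path then revisits $g_k$ repeatedly and consistency of the filtering must be checked. This is resolved by the already-present filtering step in \autoref{alg:efficient_model_instantiation} that enforces $g^*_j = \text{context}(g_j)$ whenever an index has already been bound: once the earliest sum operator on a path selects a value $g^*_k \in P$, every subsequent filter automatically respects this binding, so no inconsistency between the multiple $S^*_j$ can arise. A subtler edge case is when $g_k$ does not appear in any explicit $G_j$; here the dummy-operator construction from the proof of \autoref{thm:lemma2} can be invoked to insert $g_k$ into some $G_j$ without changing either algorithm's output, which reduces this case to the main argument.
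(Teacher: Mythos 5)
Your proposal is correct and follows essentially the same route as the paper: both reduce the claim to \autoref{thm:prop1} and argue that the witnessing sequence of tuples transfers between the index-restricted run of \autoref{alg:exhaustive_model_instantiation} and the data-restricted run of \autoref{alg:efficient_model_instantiation}, because the filter $g^*_k \in P$ is applied consistently on both sides. Your treatment is somewhat more explicit than the paper's (naming the filtered data $\tilde{S}^*$, and separately handling repeated or absent occurrences of $g_k$ via the context-consistency filter and the \autoref{thm:lemma2} expansion), but these are elaborations of the same argument rather than a different approach.
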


\begin{proof}
    The proof is similar to \autoref{thm:prop1} with some details on sequence sharing between algorithms. For $\Rightarrow$, the sequence $(G^*_1 \| L^*_1 \in S^*_1, \cdots, G^*_M \| L^*_M \in S^*_M)$ can still be transferred to \autoref{alg:efficient_model_instantiation} whose data $S^*$ is filtered by condition $g^*_k \in P$, since $\text{space}(G)$ is also filtered in \autoref{alg:exhaustive_model_instantiation} to make sure only $g^*_k \in P$ will appear in the sequence. For similar reason $\Leftarrow$ holds.
\end{proof}

With \autoref{thm:prop2} we can fully parallelize \autoref{alg:efficient_model_instantiation}. For example, to distribute the instantiation process of \eqref{eq:expr_i} equally to two workers% (one is responsible to $expr_1$ and $expr_2$, the other one is to $expr_3$ and $expr_4$)
, by applying \autoref{thm:prop2}, we can do data partition on $E^*$ as $E^*_1 = \{(1, 2), (2, 4), (1, 3)\}, E^*_2 = \{(1, 2)\}$ for worker 1, $E^*_1 = \{(3, 4)\}, E^*_2 = \{(2, 4), (1, 3), (3, 4)\}$ for worker 2. Applying \autoref{alg:efficient_model_instantiation}, worker 1 will generate $(1, x_{1,2}),\allowbreak (2, x_{2,4}), \allowbreak(1, x_{1,3}), \allowbreak(2, -x_{1,2})$, worker 2 will generate $(3, x_{3,4}), \allowbreak(4, -x_{2,4}), \allowbreak(3, -x_{1,3}), \allowbreak(4, -x_{3,4})$. It is easy to check that the results are identical.

\section{Sequential Decomposition of Large-scale Optimization}\label{sec:decomposition}

The second challenge for a rapid AMS is the solving time. Extremely large-scale mathematical optimization models usually take huge amount of time to solve, and the scale of data may even bump up due to sudden business need (e.g., big promotion), resulting in potential risk of timeout or even unsolvability. In this section, we introduce a lossy decomposition method (Guided FRH) for massive sequential models. While the feasibility of solutions is strictly maintained, the decomposition methods make a trade-off between optimality and efficiency.

\subsection{Preliminaries: Sequential Decision Making and Rolling Horizon} \label{sec:RH}

Sequential (or dynamic) decision making widely exists in applied scenarios. For example, we may need to decide the production amount of specific items in a range of dates, in which the prior decisions will influence successive ones. More formally, for a sequential model of linear constraints with $T$ periods, its decision variables $\textbf{x}$ can be divided into $T$ row vectors $\textbf{x}_1, \cdots, \textbf{x}_T$, so that the constraints can be formulated as
\begin{equation*}
    \begin{aligned}
        &\textbf{A}_1 \textbf{x}_1^T &= \textbf{b}_1 \\
        &\textbf{A}_2 [\textbf{x}_1, \textbf{x}_2]^T &= \textbf{b}_2 \\
        &\cdots &\\
        &\textbf{A}_T [\textbf{x}_1, \textbf{x}_2, \cdots, \textbf{x}_T]^T &= \textbf{b}_T
    \end{aligned}
\end{equation*}
which indicates that the constraint matrix has a block triangular structure. Here we assume the decision variables share the same semantic meaning in each period, and use $x^i_t$ to denote the $i$th variable in period $t$. We refer to \cite{BradleyStephenP.1977Amp} for a detailed introduction.

To make sequential decisions in a dynamic environment, a common business practice is rolling horizon (RH)\cite{Sethi1991}. That is, we do planning in a relatively long time window (planning horizon) using the latest available information, and only accept the generated decisions in the first several time steps (re-planning periodicity). When the re-planning periodicity passed, we start a new planning horizon with updated environment information, and repeat the above procedure. 

From this perspective, compared with the decision in the re-planning periodicity that will be actually applied, the planning after the re-planning periodicity is more likely a ``simulation-based guidance'' to reach the global optimum. That is, although the decisions after the re-planning periodicity is never actually executed, we assume that they will be executed on a simulation basis, so that the global optimum in a longer range is considered. % However, the size of the whole planning horizon should br limited, since the more distant the future, the more expensive, less reliable, or both, the forecast \cite{sethi1991theory}. 

In our large-scale scenario, the size of the planning horizon is limited due to the scalability of solver engines. To enlarge the range of future information involved in the optimization and reach global optimum in a longer time window, we adopt the idea of ``guidance'' in rolling horizon, but with a more computational efficient approach.

\subsection{Forward Rolling Horizon} \label{sec:FRH}

To begin with, we introduce a simple method, Forward Rolling Horizon (FRH) \cite{DIMITRIADIS1997S1061} to illustrate the basic idea. For a large sequential model $ P $ with length $ T $, we divide the model into $ h $ sub-models $ P_1, \cdots, P_h $. Each sub-model $ P_i $ starts at sequence period $ t_i $ and ends at period $ t_{i+1} - 1 $. The sub-problems are solved in a sequential manner so that the model $ P_i $ can take advantage of the solutions of $ P_1, \cdots, P_{i-1} $. To guide each sub-model $ P_i $ towards the global optimum, we aggregate the future information from period $ t_{i+1} $ to the last period into $M$ periods ($M = 1$ by default), and attach it to the end of the sub-model. Therefore, except for the last sub-model, each sub-model consists of $ t_{i+1} - t_i + M $ periods. The FRH optimization procedure is shown in \autoref{fig:FRH}.

\begin{figure}
	\centering
	\begin{subfigure}{\linewidth}
	    \centering
		\includegraphics[width=0.8\linewidth]{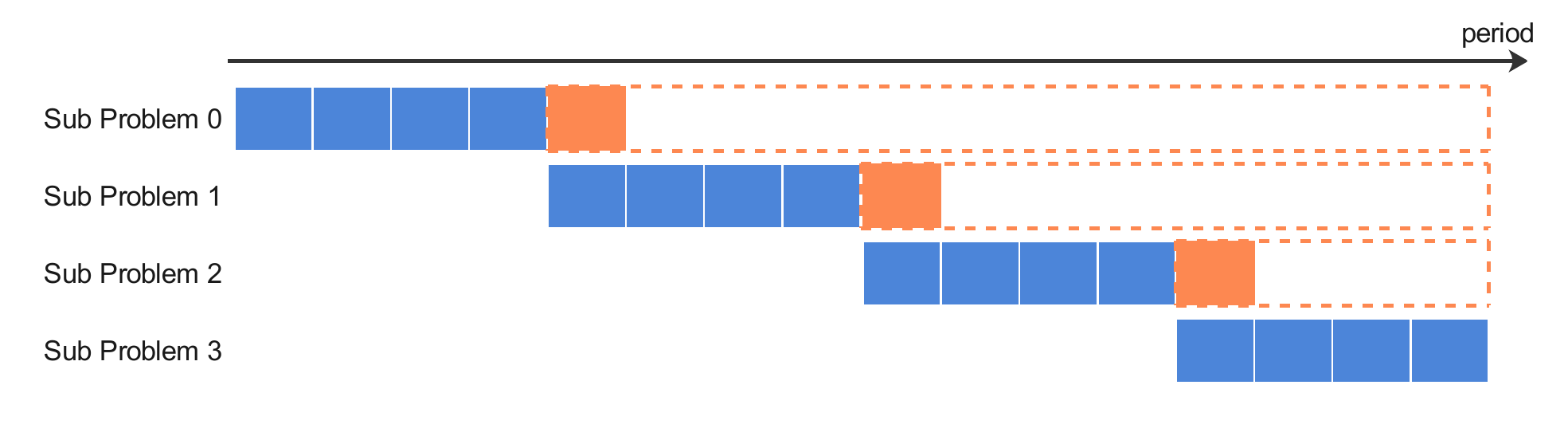}
	    \caption{Forward RH (FRH), \cite{DIMITRIADIS1997S1061}}
		\label{fig:FRH}
	\end{subfigure}
	\begin{subfigure}{\linewidth}
	    \centering
		\includegraphics[width=0.8\linewidth]{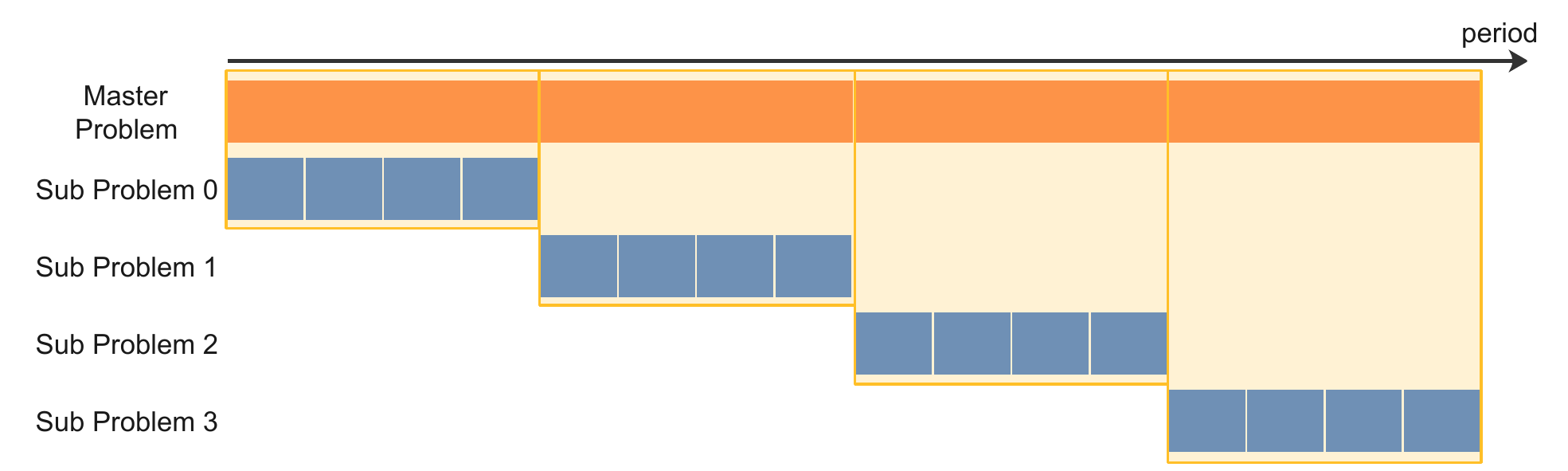}
		\caption{Guided RH}
		\label{fig:Guided_RH}
	\end{subfigure}	
	\begin{subfigure}{\linewidth}
	    \centering
		\includegraphics[width=0.8\linewidth]{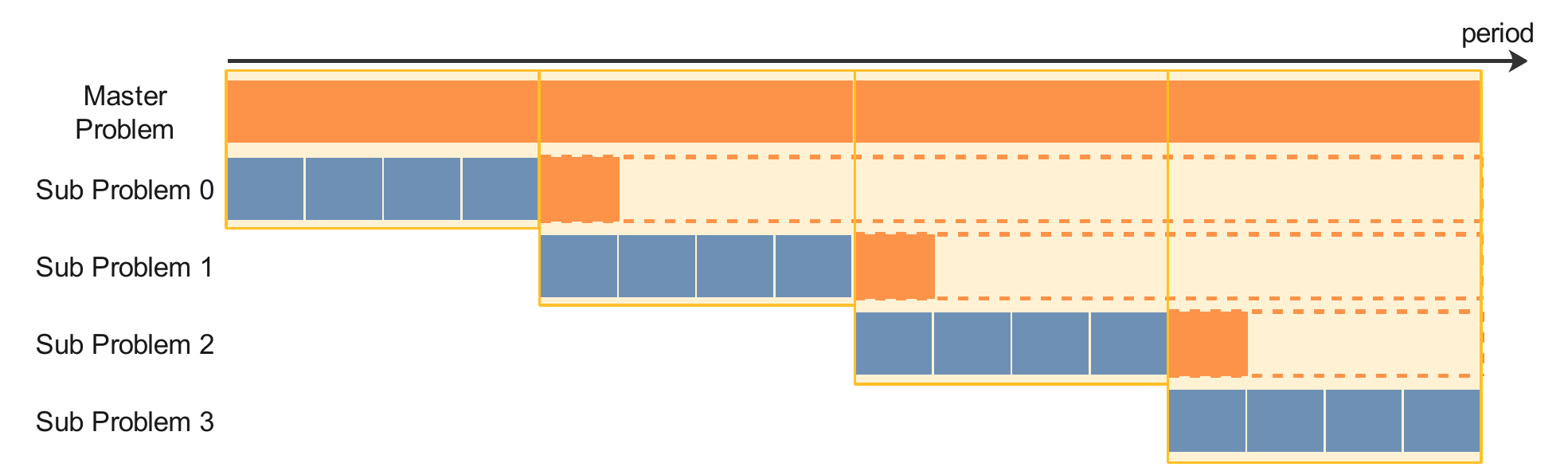}
		\caption{Guided FRH}
		\label{fig:Guided_FRH}
	\end{subfigure}	
	\vspace{-1em}
	\caption{Forward rolling horizon and proposed decomposition methods for large-scale mathematical optimization.}
	\vspace{-1em}
\end{figure}

\subsection{Guided Rolling Horizon} \label{sec:GRH}

In this section, we propose Guided Rolling Horizon (Guided RH), which decompose a raw sequential problem into a main problem and a sequence of sub-problems, with all constraints strictly satisfied and global optimality largely preserved.

For a raw sequential problem with $ T $ periods and $N$ decision variables in each period, the Guided RH optimization plan consists of three steps:

\begin{description}	
	\item[Data aggregation] Aggregate all feed-in data from $ T $ periods to $ h $ periods, in which $ h $ is the number of sub-models. Every data element at period $ i $ in the aggregated data represents the data elements from period $ t_i $ to $ t_{i+1} - 1 $ in the original data. E.g., aggregate data from daily to weekly.
	\item[Master problem solving] Construct a ``shrunk'' master problem with $ h $ periods using the aggregated data, then solve it. The master problem provides a high-level, soft guidance to the actual rolling-horizon procedure afterwards, therefore all the integer constraints can be omitted. Similar to data aggregation step, every solution element $ z^i_k $ at period $ k $ in this master problem represents the solution elements $ x^i_{t_k}, \cdots, x^i_{t_{k+1} - 1} $ in the original problem.
	\item[Sub problems solving] To transit the master problem's aggregated solutions to an applicable solutions, we apply the rolling horizon method on the original problem, with additional guided constraints and objectives to make the two solutions aligned. In each horizon $ k $ (from period $ t_k $ to period $ t_{k+1} - 1 $), we add soft constraints to each variable group along period axis, to minimize the gap between applicable solutions $ x^i_{t_k}, \cdots, x^i_{t_{k+1} - 1} $ and the aggregated solution $ z^i_k $. i.e., 
	\begin{equation*}\label{eq:guidance}
	\begin{aligned}
	    \min \quad &obj + \sum_{i=1}^N \lambda_k (u^i_k + v^i_k) \\
		s.t. \quad &\textbf{A}_k [\textbf{x}_1, \textbf{x}_2, \cdots, \textbf{x}_k]^T = \textbf{b}_k \\
		&\sum_{j=t_k}^{t_{k+1} - 1} x^i_j - z^i_k = u^i_k - v^i_k, \quad \forall i \in \{1, \cdots, N\} \\
		&u^i_k \geq 0, v^i_k \geq 0, \quad \forall i \in \{1, \cdots, N\}
	\end{aligned}
	\end{equation*}
	in which $\textbf{x}_1, \cdots, \textbf{x}_{k-1}$ are already solved and fixed in previous sub-problem solving. $obj$ is the original objective of the problem, $ u_i $ and $ v_i $ are auxiliary variables that help minimize the L1 loss $ \sum_i |\sum_{j=t_k}^{t_{k+1} - 1} x^i_j - z^i_k| $. $ \lambda_k $ is the weight controlling to what extent should the applicable solution be aligned with the aggregated solution in horizon $ k $. It is optional to re-solve the master problem with current horizon's solution fixed to reduce the cumulative error between master and sub-problems. It is also possible for sub-problems to overlap with each other. For many real scenarios, only the solution of first $ K $ periods is needed, in this case we can stop after solving $ l $ horizons so that $ t_{l+1} > K $.
\end{description}

The procedure of Guided RH is shown in \autoref{fig:Guided_RH}. 

\subsection{Guided FRH} \label{sec:GFRH}

While both FRH and Guided RH guide rolling horizon towards long-term global optimality, there is no conflict between them. Therefore, we can further improve the optimality by replacing the rolling horizon solving procedure of Guided RH with FRH so as to combine the two methods. The procedure of Guided FRH is shown in \autoref{fig:Guided_FRH}.

\subsection{Fine-tuning of approximated solutions}\label{sec:fine-tuning}

\begin{figure}
	\centering
	\includegraphics[width=0.8\linewidth]{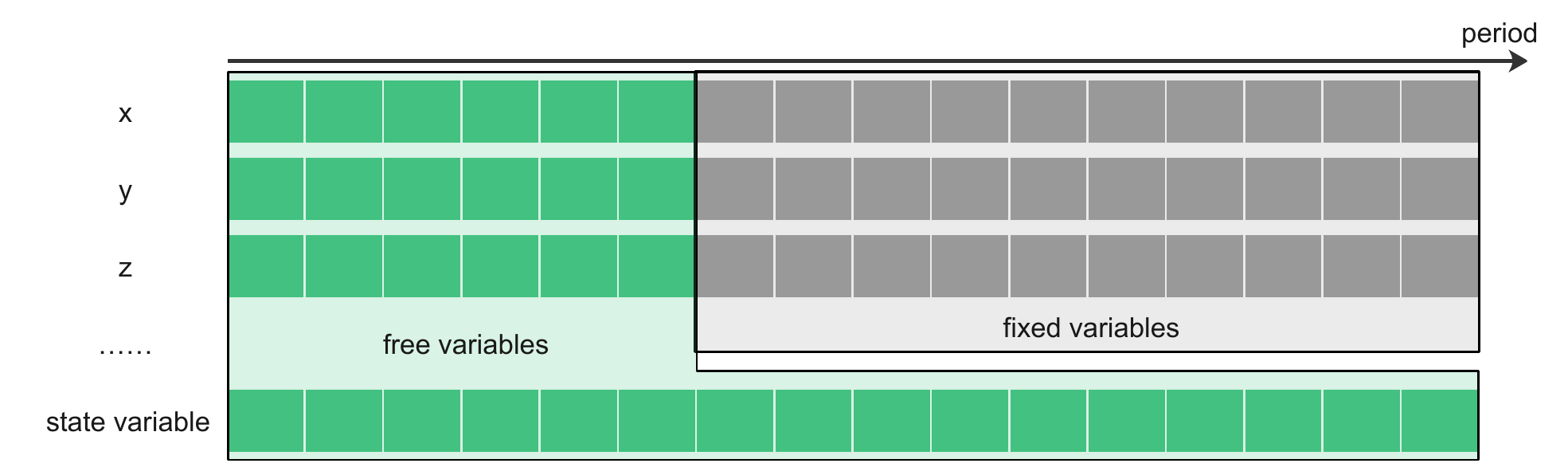}
	\vspace{-1em}
	\caption{The fine-tuning procedure. The grey variables are fixed while the green variables are to be re-optimized. State variables whose value are determined by other variables keep free in the whole sequence.}
	\vspace{-1em}
	\label{fig:fine_tuning}
\end{figure}

When we get an approximated solution $ \textbf{x}_1, \cdots, \textbf{x}_T $ of a sequential model with the above decomposition methods, it will be beneficial if we can do some fine-tuning to the solution to further boost the global optimality. A possible method is to re-optimize the variable in the first $ k $th periods, with other variables fixed. Since the solution is already a feasible one, the re-optimized solution will not be worse than the original solution. Notice that there might be some ``state variables'' whose value are fully determined by other variables. These variables will keep free in the whole sequence, so that the change of previous decisions can still propagate to the later periods. The procedure of fine-tuning is shown in \autoref{fig:fine_tuning}.

\section{Evaluation}

In this section, we first conduct offline benchmarks against current state-of-the-art AMSs on model instantiation efficiency, and then deployed Grassland in Huawei's production planning scenario for nearly half a year. By providing near-real-time production planning simulation, Grassland plays an essential role in Huawei's supply chain management against highly dynamic supply-demand environment.

\begin{table}
	\centering
	\small
	\begin{tabular}{c|rrr}
		\hline 
		& {P-Median} & {\makecell{Offshore\\Wind Farming}} & {\makecell{Food\\Manufacture I}} \\ 
		\hline 
		% GNU Mathprog 				&  		 & --- &  &  \\ 
		Gurobi Py API	            & 410.20 & 533.71 & 744.39 \\ 
		JuMP 						& 278.08 & 169.08 & 789.86 \\ 
		ZIMPL 						& 174.00 & 400.47 & 399.16 \\
		AMPL                        & 15.94  & 17.71 & 31.65 \\
		Grassland (S) & 35.91  & 18.85 & 80.83 \\ 
		Grassland (M) 	& \textbf{2.09} & \textbf{1.67} & \textbf{5.28} \\ 
		\hline 
	\end{tabular} 
	\caption{Model Instantiation Benchmark. Total time (in seconds) to process the model definition and produce the output file in CPLEX LP format.}
	\vspace{-1em}
	\label{tab:evaluation}
\end{table}

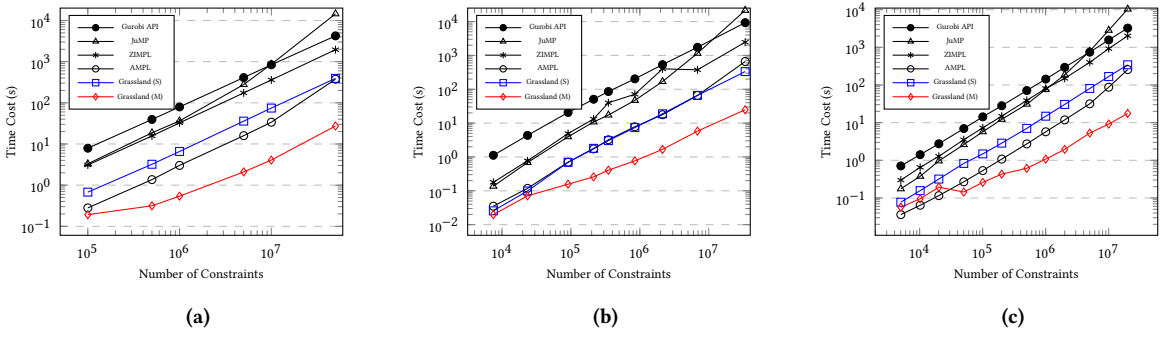
\begin{figure*}[htbp]
	\centering
	\begin{subfigure}[b]{0.3\linewidth}
        \begin{tikzpicture}
\begin{loglogaxis}[
    tiny,
    legend style={nodes={scale=0.6, transform shape}},
    width=\textwidth,
    xlabel={Number of Constraints},
    xlabel shift= -2pt,
    ylabel={Time Cost (s)},
    ylabel shift= -5pt,
    xmin=50000, xmax=60000000,
    ymin=0, ymax=20000,
    xtick={100000,1000000,10000000},
    ytick={0.1,1,10,100,1000,10000},
    legend pos=north west,
    ymajorgrids=true,
    grid style=dashed,
    mark size=1.5pt,
    scaled ticks=false
]

\addplot[color=black, mark=*]
    coordinates {
    (100101,7.8993)(500101,39.4655)(1000101,79.4978666666667)(5000101,410.197366666667)(10000101,840.069266666667)(50000101,4230.0839)
    };
    \addlegendentry{Gurobi API}
    
\addplot[color=black, mark=triangle]
    coordinates {
    (100101,3.30533333333333)(500101,18.6683333333333)(1000101,36.376)(5000101,278.079666666667)(10000101,863.207)(50000101,14518.064)
    };
    \addlegendentry{JuMP}

\addplot[color=black, mark=asterisk]
    coordinates {
    (100101,3.08656666666667)(500101,15.6893)(1000101,32.6420666666667)(5000101,174.001166666667)(10000101,359.682533333333)(50000101,1953.8551)
    };
    \addlegendentry{ZIMPL}
    
\addplot[color=black, mark=o]
    coordinates {
    (100101,0.280340333333333)(500101,1.36189666666667)(1000101,3.02159666666667)(5000101,15.9434666666667)(10000101,33.7797666666667)(50000101,374.249)
    };
    \addlegendentry{AMPL}

\addplot[color=blue, mark=square]
    coordinates {
    (100101,0.676533333333333)(500101,3.21753333333333)(1000101,6.58343333333333)(5000101,35.9115333333333)(10000101,74.5211333333333)(50000101,392.210833333333)
    };
    \addlegendentry{Grassland (S)}

\addplot[color=red, mark=diamond]
    coordinates {
    (100101,0.192333333333333)(500101,0.3152)(1000101,0.540833333333333)(5000101,2.0926)(10000101,4.06783333333333)(50000101,27.3155)
    };
    \addlegendentry{Grassland (M)}
    
\end{loglogaxis}
\end{tikzpicture}
        \vspace{-10pt}
		\caption{}
		\label{fig:evaluation_PM}
	\end{subfigure}
	\begin{subfigure}[b]{0.3\linewidth}
        \begin{tikzpicture}
\begin{loglogaxis}[
    tiny,
    legend style={nodes={scale=0.6, transform shape}},
    width=\textwidth,
    xlabel={Number of Constraints},
    xlabel shift= -2pt,
    ylabel={Time Cost (s)},
    ylabel shift= -5pt,
    xmin=0, xmax=40000000,
    ymin=0, ymax=25000,
    xtick={1000,10000,100000,1000000,10000000},
    ytick={0.01,0.1,1,10,100,1000,10000},
    legend pos=north west,
    ymajorgrids=true,
    grid style=dashed,
    mark size=1.5pt,
    scaled ticks=false
]

\addplot[color=black, mark=*]
    coordinates {
    (7502,1.12006666666667)(23503,4.3429)(91012,20.5888)(213560,50.6151)(350024,85.6159)(847120,203.182366666667)(2135060,533.7146)(6860048,1737.5721)(33740240,9286.7887)
    };
    \addlegendentry{Gurobi API}
    
\addplot[color=black, mark=triangle]
    coordinates {
    (7502,0.1366)(23503,0.684)(91012,3.948)(213560,10.809)(350024,17.31)(847120,46.886)(2135060,169.078)(6860048,1155)(33740240,21470.932)
    };
    \addlegendentry{JuMP}

\addplot[color=black, mark=asterisk]
    coordinates {
    (7502,0.178533333333333)(23503,0.7674)(91012,4.92353333333333)(213560,13.2079333333333)(350024,39.9356)(847120,71.1166333333333)(2135060,400.4732)(6860048,374.8971)(33740240,2454.1846)
    };
    \addlegendentry{ZIMPL}
    
\addplot[color=black, mark=o]
    coordinates {
    (7502,0.0350492257060262)(23503,0.118090389454209)(91012,0.700482728177171)(213560,1.72823030502316)(350024,3.22135847730035)(847120,7.80823356909391)(2135060,17.7107)(6860048,65.4594)(33740240,660.369)
    };
    \addlegendentry{AMPL}

\addplot[color=blue, mark=square]
    coordinates {
    (7502,0.0256)(23503,0.0991)(91012,0.691733333333333)(213560,1.79723333333333)(350024,3.0352)(847120,7.37406666666667)(2135060,18.8477)(6860048,65.9022333333333)(33740240,332.2728)
    };
    \addlegendentry{Grassland (S)}

\addplot[color=red, mark=diamond]
    coordinates {
    (7502,0.0197666666666667)(23503,0.0709333333333333)(91012,0.156933333333333)(213560,0.257833333333333)(350024,0.406766666666667)(847120,0.767)(2135060,1.67306666666667)(6860048,5.78446666666667)(33740240,24.7764333333333)
    };
    \addlegendentry{Grassland (M)}
    
\end{loglogaxis}
\end{tikzpicture}
        \vspace{-10pt}
		\caption{}
		\label{fig:evaluation_OWF}
	\end{subfigure}	
	\begin{subfigure}[b]{0.3\linewidth}
		\begin{tikzpicture}
\begin{loglogaxis}[
    tiny,
    legend style={nodes={scale=0.6, transform shape}},
    width=\textwidth,
    xlabel={Number of Constraints},
    xlabel shift= -2pt,
    ylabel={Time Cost (s)},
    ylabel shift= -5pt,
    xmin=0, xmax=60000000,
    ymin=0, ymax=11000,
    xtick={1000,10000,100000,1000000,10000000},
    ytick={0.01,0.1,1,10,100,1000,10000},
    legend pos=north west,
    ymajorgrids=true,
    grid style=dashed,
    mark size=1.5pt,
    scaled ticks=false
]

\addplot[color=black, mark=*]
    coordinates {
    (5032,0.709633333333333)(10192,1.41136666666667)(20155,2.74793333333333)(50167,6.96133333333333)(100480,14.2575333333333)(200695,28.1073666666667)(501255,71.2264666666667)(1001992,143.243033333333)(2002216,294.059533333333)(5004160,744.3875)(10004560,1566.7432)(20007720,3238.5885)
    };
    \addlegendentry{Gurobi API}
    
\addplot[color=black, mark=triangle]
    coordinates {
    (5032,0.1753)(10192,0.3763)(20155,0.963)(50167,2.614)(100480,5.72)(200695,12.249)(501255,30.946)(1001992,76.112)(2002216,185.558)(5004160,789.857)(10004560,2832.553)(20007720,10334.378)
    };
    \addlegendentry{JuMP}

\addplot[color=black, mark=asterisk]
    coordinates {
    (5032,0.297966666666667)(10192,0.650266666666667)(20155,1.28026666666667)(50167,3.51716666666667)(100480,7.27226666666667)(200695,14.8980666666667)(501255,38.9097)(1001992,80.7829666666667)(2002216,149.5566)(5004160,399.163466666667)(10004560,922.2878)(20007720,2015.3103)
    };
    \addlegendentry{ZIMPL}
    
\addplot[color=black, mark=o]
    coordinates {
    (5032,0.036126)(10192,0.064027)(20155,0.115419)(50167,0.268759)(100480,0.533533)(200695,1.0746)(501255,2.72504)(1001992,5.71547)(2002216,11.8324)(5004160,31.6536)(10004560,87.318)(20007720,256.146)
    };
    \addlegendentry{AMPL}

\addplot[color=blue, mark=square]
    coordinates {
    (5032,0.0771666666666667)(10192,0.1567)(20155,0.317733333333333)(50167,0.8271)(100480,1.4799)(200695,2.8593)(501255,7.0362)(1001992,14.847)(2002216,30.4616)(5004160,80.8257)(10004560,167.8728)(20007720,343.2538)
    };
    \addlegendentry{Grassland (S)}

\addplot[color=red, mark=diamond]
    coordinates {
    (5032,0.0569)(10192,0.0952666666666667)(20155,0.193533333333333)(50167,0.1437)(100480,0.2594)(200695,0.4305)(501255,0.6131)(1001992,1.0784)(2002216,1.9496)(5004160,5.2839)(10004560,9.169)(20007720,17.5701)
    };
    \addlegendentry{Grassland (M)}
    
\end{loglogaxis}
\end{tikzpicture}
		\vspace{-10pt}
		\caption{}
		\label{fig:evaluation_FMI}
	\end{subfigure}	
% 	\begin{subfigure}[b]{0.24\linewidth}
% 		%\includegraphics[width=\linewidth]{figures/result_SD}
% 		\include{pgfplots/sd}
% 		\vspace{-10pt}
% 		\caption{}
% 		\label{fig:evaluation_SD}
% 	\end{subfigure}		
	\vspace{-1em}
	\caption{Offline model instantiation benchmark on (a) P-Median (b) Offshore Wind Farming (c) Food Manufacture I.}
	\vspace{-0.5em}
	\label{fig:evaluation}
\end{figure*}

\begin{figure*}[htbp]
	\centering
	\begin{subfigure}[b]{0.3\linewidth}
        \begin{tikzpicture}
	\begin{axis}[
    	tiny,
        legend style={nodes={scale=0.75, transform shape}},
        width=\textwidth,
        xlabel={Methods},
        xlabel shift= -2pt,
        xtick=data,
        xticklabels={Baseline,RH,FRH,G-RH,G-FRH},
        ylabel={Time Cost (s)},
        ylabel shift= -5pt,
	    ybar stacked,
	    ymin=0,
	    xtick pos=left,
	    ytick pos=left
	]
	\addplot coordinates
		{(0,856.995) (1,3.12) (2,15.32) (3,43.83) (4,49.49)};
		\addlegendentry{Instantiation time}
	\addplot coordinates
		{(0,3135.16) (1,40.80) (2,67.68) (3,73.31) (4,167.62)};
		\addlegendentry{Solving time}
	\addlegendimage{/pgfplots/refstyle={plt:sd_obj}}
    \addlegendentry{Objective value}
    \addlegendimage{/pgfplots/refstyle={plt:sd_optimal_obj}}
    \addlegendentry{Optimal obj}
	
	\end{axis}
	\begin{axis}[
    	tiny,
        legend style={nodes={scale=0.75, transform shape}},
        width=\textwidth,
        axis y line*=right,
        xtick=data,
        xticklabels={,,,,},
        ylabel={Objective Value},
        ylabel shift= -5pt,
	    ymin=5.1e9, ymax=5.65e9,
	    nodes near coords,
	   % every node near coord/.append style={font=\tiny},
	   % every node near coord/.append style={/pgf/number format/.cd,fixed zerofill,precision=3}
	    point meta=explicit symbolic,
	   	every node near coord/.append style={
	   	    font=\tiny,
            align=center,
            text width=1cm
        }
	]
	\addplot[sharp plot,line legend,color=black, mark=*] coordinates
		{(0,5.15626e9) [$5.156 \cdot 10^9$]
		(1,5.54017e9) [$5.540 \cdot 10^9$ (+7.45\%)]
		(2,5.19818e9) [$5.198 \cdot 10^9$ (+0.81\%)]
		(3,5.26513e9) [$5.265 \cdot 10^9$ (+2.11\%)]
		(4,5.17477e9) [$5.175 \cdot 10^9$ (+0.36\%)]} 
% 		node[below] at (axis cs:1,5.54017e9) {\tiny (+7.45\%)}
% 		node[below] at (axis cs:2,5.19818e9) {\tiny (+0.81\%)}
% 		node[below] at (axis cs:3,5.26513e9) {\tiny (+2.11\%)}
% 		node[below] at (axis cs:4,5.17477e9) {\tiny (+0.36\%)}
		;\label{plt:sd_obj}
	\addplot[sharp plot,line legend,black,update limits=false, dashed] 
    	coordinates {(-1,5.15626e9) (5,5.15626e9)};\label{plt:sd_optimal_obj}
    % 	\addlegendentry{Objective value}
	\end{axis}
\end{tikzpicture}
        \vspace{-10pt}
		\caption{}
		\label{fig:decomposition_orig}
	\end{subfigure}
	\begin{subfigure}[b]{0.3\linewidth}
        \begin{tikzpicture}
	\begin{axis}[
    	tiny,
        legend style={nodes={scale=0.75, transform shape}},
        width=\textwidth,
        xlabel={Methods},
        xlabel shift= -2pt,
        xtick=data,
        xticklabels={Baseline,RH,FRH,G-RH,G-FRH},
        ylabel={Time Cost (s)},
        ylabel shift= -5pt,
	    ybar stacked,
	    ymin=0,
	    xtick pos=left,
	    ytick pos=left
	]
	\addplot coordinates
		{(0,856.995) (1,8.16) (2,17.73) (3,46.92) (4,51.99)};
		\addlegendentry{Instantiation time}
	\addplot coordinates
		{(0,3135.16) (1,191.15) (2,214.79) (3,211.42) (4,284.03)};
		\addlegendentry{Solving time}
	\addlegendimage{/pgfplots/refstyle={plt:sd_obj_f}}
    \addlegendentry{Objective value}
    \addlegendimage{/pgfplots/refstyle={plt:sd_optimal_obj_f}}
    \addlegendentry{Optimal obj}
	\end{axis}
	\begin{axis}[
    	tiny,
        legend style={nodes={scale=0.5, transform shape}},
        width=\textwidth,
        axis y line*=right,
        xtick=data,
        xticklabels={,,,,},
	    ymin=5.1e9, ymax=5.65e9,
	    nodes near coords,
	   % every node near coord/.append style={font=\tiny},
	   % every node near coord/.append style={/pgf/number format/.cd,fixed zerofill,precision=3},
	    point meta=explicit symbolic,
	   	every node near coord/.append style={
	   	    font=\tiny,
            align=center,
            text width=1cm
        },
	    ylabel={Objective Value},
        ylabel shift= -5pt,
	]
	\addplot[sharp plot,line legend,color=black, mark=*] coordinates
		{(0,5.15626e9) [$5.156 \cdot 10^9$]
		(1,5.33775e9) [$5.338 \cdot 10^9$ (+3.52\%)]
		(2,5.19651e9) [$5.197 \cdot 10^9$ (+0.78\%)]
		(3,5.22104e9) [$5.221 \cdot 10^9$ (+1.26\%)]
		(4,5.17321e9) [$5.173 \cdot 10^9$ (+0.33\%)]}
% 		node[below] at (axis cs:1,5.33775e9) {\tiny (+3.52\%)}
% 		node[below] at (axis cs:2,5.19651e9) {\tiny (+0.78\%)}
% 		node[below] at (axis cs:3,5.22104e9) {\tiny (+1.26\%)}
% 		node[below] at (axis cs:4,5.17321e9) {\tiny (+0.33\%)}
		;\label{plt:sd_obj_f}
	\addplot[sharp plot,line legend,black,update limits=false, dashed] 
    	coordinates {(-1,5.15626e9) (5,5.15626e9)};\label{plt:sd_optimal_obj_f}
    	% node[above] at (axis cs:1950,5.15626e9) {Houses};
	\end{axis}
\end{tikzpicture}
        \vspace{-10pt}
		\caption{}
		\label{fig:decomposition_fine_tuning}
	\end{subfigure}	
	\begin{subfigure}[b]{0.3\linewidth}
        \begin{tikzpicture}
\begin{semilogyaxis}[
    tiny,
    legend style={nodes={scale=0.65, transform shape}},
    width=\textwidth,
    xlabel={Number of Constraints},
    xlabel shift= -2pt,
    ylabel={Time Cost (s)},
    ylabel shift= -5pt,
    xmin=0, xmax=4000000,
    ymin=0, ymax=80000,
    xtick={0,1000000,2000000,3000000,4000000},
    ytick={1,10,100,1000,10000},
    legend pos=north west,
    ymajorgrids=true,
    grid style=dashed,
    mark size=1.5pt,
    scaled ticks=false
]

\addplot[color=black, mark=*]
    coordinates {
    (427414,115.5568)(868668,268.7541)(1311685,481.0783)(1755291,893.3147)(2199175,1139.4417)(2643484,1277.4025)(3087902,1631.6566)(3443465,2445.5923)
    };
    \addlegendentry{ZIMPL based version}
    
\addplot[color=black, mark=asterisk]
    coordinates {
    (427414,54.4043)(868668,112.8448)(1311685,174.8516)(1755291,232.9484)(2199175,295.835)(2643484,359.1428)(3087902,423.8897)(3443465,482.1621)
    };
    \addlegendentry{Gurobi API based version}

\addplot[color=blue, mark=square]
    coordinates {
    (427414,15.4386)(868668,30.8101)(1311685,46.6049)(1755291,62.179)(2199175,79.0009)(2643484,94.8718)(3087902,108.5973)(3443465,126.6095)
    };
    \addlegendentry{Grassland (S)}

\addplot[color=red, mark=diamond]
    coordinates {
    (427414,1.3707)(868668,1.9367)(1311685,2.2098)(1755291,2.7032)(2199175,2.9862)(2643484,3.7799)(3087902,4.7175)(3443465,6.2102)
    };
    \addlegendentry{Grassland (M)}
    
\end{semilogyaxis}
\end{tikzpicture}
        \vspace{-10pt}
		\caption{}
		\label{fig:evaluation_SD}
	\end{subfigure}	
	\vspace{-1em}
	\caption{Online experiment results. (a) Comparison between baseline and different decomposition methods on time cost and optimality. (b) Same as (a) with fine-tuning of first 20 periods. (c) Comparison of model instantiation efficiency.}
	\vspace{-0.5em}
	\label{fig:decomposition}
\end{figure*}
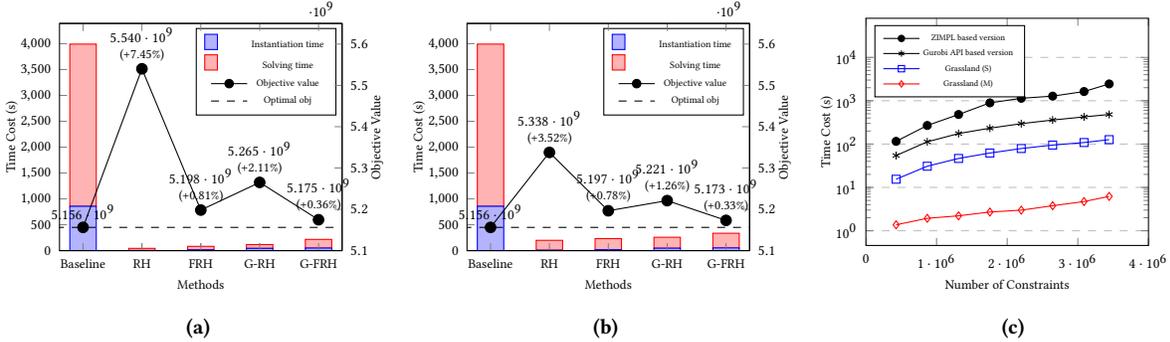

\subsection{Offline Model Instantiation Benchmark}

Before online experiment, we first benchmark our system against several widely-used modeling software on some typical mathematical optimization problems, to evaluate the scalability and effectiveness of proposed model instantiation method.

\subsubsection{Test Problems}

We select three typical problems from JuMP \cite{LubinDunningIJOC} and Gurobi modeling examples \cite{gurobi_modeling_examples}.

\begin{itemize}
	\item \textbf{P-Median}: This model is used in \cite{hart2011pyomo,LubinDunningIJOC} to compare the modeling efficiency of Pyomo and JuMP with other AMSs.
	\item \textbf{Offshore Wind Farming}: This model in Gurobi modeling example minimize the cost of laying underwater cables to collect electricity produced by an offshore wind farm.
	\item \textbf{Food Manufacture I}: This model in Gurobi modeling example is a blending problem. Multiple raw materials are combined in a way to achieve the lowest cost.
% 	\item \textbf{Demand-Supply Analysis}: A real-world model in a large ICT company to maximize the order fulfillment rate with optimized production, delivery and replacement amount. 
\end{itemize}

\subsubsection{Compared Modeling Softwares}

In the following experiments, we compare the following modeling software. Modeling software implemented in interpreted languages such as %PuLP \cite{mitchell2011pulp} and 
Pyomo \cite{hart2011pyomo} are not included due to lack of efficiency in the prior benchmark works \cite{hart2011pyomo,LubinDunningIJOC}.

\begin{itemize}
% 	\item \textbf{GNU MathProg} \cite{makhorin2000modeling}: The modeling language in GNU Linear Programming Kit (GLPK), which is a subset of AMPL and written in C.
	\item \textbf{Gurobi Modeling API} \cite{gurobi2018gurobi}: The Python interface of Gurobi, which provides the standard implementation of Gurobi modeling examples.
	\item \textbf{ZIMPL} \cite{Koch2004}: The modeling language in SCIP Optimization Suite, written in plain C.
	\item \textbf{JuMP} \cite{LubinDunningIJOC}: A modern AML implemented in Julia, which is reported to achieve high efficiency comparable to commercial products such as AMPL, with the benefits of remaining in a high-level modern language. 
	\item \textbf{AMPL} \cite{fourer1990modeling}: A leading, widely-used commercial AMS.
	\item \textbf{Grassland (Single-threaded)}: The implementation of our proposed method in \autoref{sec:model_instantiation} with only one thread.
	\item \textbf{Grassland (Multi-threaded)}: The implementation of our proposed method, including parallelization described in \autoref{sec:parallelization} (64 threads).
\end{itemize}

\subsubsection{Results}

The benchmark result on all the problems is shown in \autoref{tab:evaluation}. It is shown that Grassland achieve absolute predominance over all other modeling software. While single-threaded Grassland already achieve 4-5x speedup over the fastest open-source modeling software (ZIMPL) and comparable with the leading commercial software (AMPL), multi-threaded Grassland further achieves 6-10x speedup over AMPL.

We also tested the software on different scales of models. The result is shown in \autoref{fig:evaluation}. It is shown that Grassland has superior performance over all scale of models.

\subsection{Online Experiment}

\subsubsection{Background: Production planning and supply-demand analysis}

Production planning is the planning of production activities to transform raw materials (supply) into finished products, meeting customer's order (demand) in the most efficient or economical way possible. It lies in the core of manufacturing companies' supply chain management, directly influencing the profit and customer satisfaction. Mathematical optimization is a mainstream method for production planning, and \cite{10.5555/1951836} provides a comprehensive introduction to this. As a world-leading electronic manufacturer, Huawei provides more than ten thousand kinds of end products, with even much more kinds of raw materials and intermediate assemblies, to satisfy millions of demands from all over the world. Modeling in such a large-scale scenario involves millions of variables and constraints. 

While mathematical optimization can deliver near-optimal production planning solution for a certain input of supply and demand, The demand and supply itself is always changing with high uncertainty due to external dynamic factors%, such as emerging market changes and unstable policies
. Our planning department needs to react quickly to such changes to ensure business continuity, which is called \textit{supply-demand analysis}. 
% For example, when a raw material is suddenly unavailable, they try different ways to increase the supply of alternative materials to sustain the production; when the demand of a specific end products burst, they need to identify the shortage of materials and prioritize production/purchase supply of them. 
To support the analysis, a crucial process is \textit{production planning simulation} which returns the final planning result (e.g., fulfillment rate) for a certain input of supply and demand, helping planners evaluate and improve their analysis. For example, when several raw materials are suddenly unavailable, planners may try different ways to increase the supply of alternative materials and run production planning simulation for each of them, adjust the supply iteratively to increase the fulfillment rate of related end products, and get the final supply adjustment decision.
% , which together form a huge multi-level bill of materials\footnote{A bill of materials (BOM) is a list of the raw materials and sub-assemblies, and the quantities of each needed to manufacture an end product, usually represented as a directed acyclic graph.} with millions of lines.

\subsubsection{Dataset and Compared Methods}

The dataset is from real production environment which consists of all needed manufacturing data in 78 weeks (one week per period). The instantiated model consists of 5,957,634 variables, 3,443,465 constraints and 30,366,971 nonzero elements in the constraint matrix. In the following experiments, we compare the result before and after the application of Grassland in the production planning simulation scenario. The baseline is the original method before the deployment of Grassland, which is quite standard: ZIMPL is used for modeling and instantiation, and the instantiated full model is directly solved without any decomposition. The decomposition methods in \autoref{sec:decomposition}  (RH, FRH, Guided RH, Guided FRH) are tested separately. Sequence length $T = 78$, number of submodels $h = 8$, fine-tuning periods $k = 20$. All models are solved via Mosek ApS\cite{mosek}. \footnote{Note that we cannot deploy more efficient solvers like CPLEX on Huawei's online enterprise environment due to export restriction of the US (so as advanced AMSs like AMPL). However, all solvers always deliver exact optimal solution if possible, and have similar exponential curves between problem scale and solving time. Therefore the selection of external solvers will not change the optimality and solving time ratio.}

% \begin{itemize}
% 	\item \textbf{Baseline}: The original method before the deployment of Grassland. ZIMPL is used for modeling and instantiation, and the instantiated full model is directly solved without any decomposition.
% 	\item \textbf{Rolling Horizon (RH)}: The basic rolling horizon method described in \autoref{sec:RH}.
% 	\item \textbf{Future Aggregation RH (FRH)}: The decomposition method described in \autoref{sec:FRH}.
% 	\item \textbf{Guided RH (G-RH)}: \autoref{sec:GRH}
% 	\item \textbf{Guided FRH (G-FRH)}: \autoref{sec:GFRH}
% \end{itemize}

\subsubsection{Results}

The main result is shown in \autoref{fig:decomposition_orig}. Our proposed methods can achieve significant acceleration (15-35x faster) than baseline method, while the feasibility is strictly maintained and the loss of objective is small. For Guided FRH, it can achieve 15x acceleration with only 0.36\% of the objective loss. Practically, such a tiny optimality loss does not cause sensible issues ($0.1\% \sim 0.2\%$ fluctuation of fulfillment ratio), especially considering that multiple source of more dominant error exist in complex business models such as prediction and approximation error.

The fine-tuning result is shown in \autoref{fig:decomposition_fine_tuning} in which the first 20 periods of the problem is re-optimized following \autoref{sec:fine-tuning}. It is shown that fine-tuning can significantly narrow the gap between the objective value of decomposition methods and the optimal one.

Additionally, for instantiation efficiency in online scenarios, we compare Grassland with two legacy systems that we previously developed and deployed in online environment, based on ZIMPL and Gurobi Modeling API respectively. We use the number of periods involved to control the model size. The result is shown in \autoref{fig:evaluation_SD}. It is shown that the result is aligned with offline benchmarks in \autoref{fig:evaluation}.

%\begin{figure}[H]
%	\centering
%	\includegraphics[trim={2cm 0 0 0},clip,width=1.15\linewidth]{figures/evaluation_OWF}
%	\caption{Model Instantiation Benchmark on Offshore Wind Farming Problem.}
%	\label{fig:evaluation_OWF}
%\end{figure}

\section{Conclusion}

In this paper, we propose Grassland, an algebraic modeling system that is efficient in large-scale mathematical optimization scenarios, including a parallelized instantiation scheme for general linear constraints, and a lossy sequential decomposition method that accelerates large-scale model solving exponentially. We perform both offline benchmarks and online deployment in Huawei's production planning scenario. The results demonstrate the significant superiority of Grassland over strong baselines.

%%
%% The acknowledgments section is defined using the "acks" environment
%% (and NOT an unnumbered section). This ensures the proper
%% identification of the section in the article metadata, and the
%% consistent spelling of the heading.
% \begin{acks}
% To Robert, for the bagels and explaining CMYK and color spaces.
% \end{acks}

%%
%% If your work has an appendix, this is the place to put it.
\section*{Appendix}
\appendix

% \section{Postprocessing}

\section{Implementation Details}

As an algebraic modeling system, Grassland is implemented in five layers:

\begin{description}	
	\item[Modeling API (AML)] The grassland modeling API is implemented as a Python package \texttt{grassland} (\texttt{gl}). 
    \item[Intermediate representation (IR) layer] This layer plays as a bridge between Modeling API and highly-efficient c++ backend. Models defined by Grassland modeling API is translated into a unified, JSON-based intermediate representation with six components (variables, constants, index placeholders, expression graphs, constraints and bounds).
    \item[Decomposition layer] Implements four decomposition methods in \autoref{sec:decomposition} (RH, FRH, Guided RH, Guided FRH).
    \item[Model instantiation layer] Implements the parallelized model instantiation scheme in \autoref{sec:instantiation_alg} and \autoref{sec:parallelization}. The parallelization is implemented by multi-threaded programming. Due to the extreme efficiency of our proposed method, even the float-to-string conversion becomes a significant bottleneck. Here we apply Ryu \cite{ryu} to accelerate the conversion.
    \item[Solver layer] Calls different solver engine to solve the instantiated model or sub-model and return back the solution.
\end{description}

\section{Integration with rounding procedure}

In applied optimization pipeline, there usually exist some integer constraints for decision variables. To achieve this, a mixed integer programming model or an external heuristics rounding procedure may apply. However, the efficiency of rounding procedure can heavily rely on the scale of the model, as well as the number of integer constraints. With the above decomposition methods, we can round the variables at the same time when we solve each sub-model. Since the sub-model is significantly smaller than the original one, the rounding procedure will also be largely accelerated.

\section{Test problems for Model Instantiation}\label{sec:test_problems}

\begin{table}
	\centering
	\small
	\begin{tabular}{c|rrr}
		\hline 
		Test Problem & \#(variables) & \#(constraints) & \#(nonzeros) \\
		\hline 
		P-Median & 5,050,000 & 5,000,164 & 15,050,000 \\ 
		Offshore Wind Farming & 4,170,120 & 4,220,120 & 10,425,300 \\ 
		Food Manufacture I & 5,006,394 & 14,974,502 & 39,932,750 \\
% 		Demand-Supply Analysis & 5,957,634 & 3,443,465 & 30,366,971 \\
		\hline 
	\end{tabular} 
	\caption{The basic statistics of the benchmark test problems in \autoref{tab:evaluation}.}
	\label{tab:test_info}
\end{table}

The basic statistics of the benchmark test problems in \autoref{tab:evaluation} is listed in \autoref{tab:test_info}. Problem data is randomly generated while maintaining the feasibility, and we control the size of the data to generate different scale of models in \autoref{fig:evaluation}.

\section{Expreimental Setting}

For offline model instantiation benchmark, all benchmarks are run on a server with 32-core (64-thread) CPUs and 192GB memory. The output format of all constructed problems is set to CPLEX LP format (a standard LP/MIP format that is supported by most of the mathematical solvers). The identity of all constructed problems by different softwares on small and medium size are checked by actual solving with Gurobi with the same optimized objective value, and checked by static comparison script for extremely large size that cannot be directly solved in reasonable time. MIP problems are relaxed into LP in the identity checking process to extend scalability. For multi-threaded Grassland, the size of thread pool is set to 64. The basic information of the benchmark problems are shown in \autoref{tab:test_info}.

For online experiment. all experiments are run on a Huawei cloud server with 32-core (64-thread) CPUs and 256GB memory. All models are solved via Mosek ApS\cite{mosek}. For production planning problem, the sequence length $ T = 78 $ and we use number of sub-models $ h = 8 $, fine-tuning periods $ k = 20 $ for all decomposition methods.

Software version: 

\begin{itemize}
% 	\item GNU Mathprog (4.65, released in Feb 2018)
	\item Gurobi Modeling API (7.5.1, released in Jul 2017)\footnote{Due to export restrictions, we cannot purchase and deploy the latest version of Gurobi in Huawei's enterprise environment.}
	\item ZIMPL (3.4.0, released in June 2020)
	\item JuMP (0.21.3, released in June 2020)
	\item AMPL (20200810, released in Aug 2020)
\end{itemize}

\section{Experimental Result}

\begin{table}
	\centering
	\small
	\begin{tabular}{@{\hspace{3pt}}c@{\hspace{3pt}}|@{\hspace{3pt}}r@{\hspace{6pt}}r@{\hspace{6pt}}r@{\hspace{6pt}}r@{\hspace{3pt}}}
		\hline 
		Method & {\makecell{Model\\instantiation time}} & {\makecell{Solving\\time}} & {\makecell{Total\\time}} & {\makecell{Objective\\($\times 10^9$)}} \\% & {\makecell{Cumulate\\satisfaction ratio}}		\\ 
		\hline 
		Baseline & 857.00 & 3135.16 & 3992.16 & 5.15626 \\ 
		\hline 
		RH & 4.94 & 40.80 & 45.74 & (+7.45\%) 5.54017 \\ 
		FARH & $ 8.54^* $ + 6.78 & 67.68 & 83.00 & (+0.81\%) 5.19818 \\ 
		G-RH & $ 28.84^{**} $ + $ 8.79^* $ + 6.20 & 73.31 & 117.14 & (+2.11\%) 5.26513 \\ 
		G-FARH & $ 30.99^{**}$ + $11.07^*$ + 7.43 & 167.62 & 217.11 & \textbf{(+0.36\%) 5.17477} \\ 
		\hline 
	\end{tabular} 
	\caption{Sequential decomposition benchmark on demand-supply analysis problem. In ``Model instantiation time'' column, time marked with ``*'' is the time for data compression, time marked with ``**'' is the time to generate guided constraints and objectives.}
	\label{tab:decomposition}
\end{table}

\begin{table}
	\centering
	\small
	\begin{tabular}{@{\hspace{0pt}}c@{\hspace{1pt}}|@{\hspace{1pt}}r@{\hspace{3pt}}r@{\hspace{3pt}}r@{\hspace{3pt}}r@{\hspace{0pt}}}
		\hline 
		Method & {\makecell{Model\\instantiation time}} & {\makecell{Solving\\time}} & {\makecell{Total\\time}} & {\makecell{Objective\\($\times 10^9$)}} \\% & {\makecell{Cumulate\\satisfaction ratio}}		\\ 
		\hline  
		Baseline & 857.00 & 3135.16 & 3992.16 & 5.15626 \\ 
		\hline 
		RH & 8.16 & $ 150.35^\dagger $ + 40.80 & 199.31 & (+3.52\%) 5.33775 \\ 
		FARH & $ 8.54^* $ + 9.19 & $ 147.11^\dagger $ + 67.68 & 232.52 & (+0.78\%) 5.19651 \\ 
		G-RH & $ 28.84^{**} $ + $ 8.79^* $ + 9.29 & $ 138.11^\dagger $ + 73.31 & 258.34 & (+1.26\%) 5.22104 \\ 
		G-FARH & $ 30.99^{**} $ + $ 11.07^* $ + 9.93 & $ 116.41^\dagger $ + 167.62 & 336.02 & \textbf{(+0.33\%) 5.17321} \\ 
		\hline 
	\end{tabular} 
	\caption{Fine-tuning benchmark on demand-supply analysis problem. In ``Solving time'' column, time marked with ``$ \dagger $'' is the extra solving time for fine tuning.}
	\label{tab:fine-tuning}
\end{table}

The detailed result of online experiment is shown in \autoref{tab:decomposition} and \autoref{tab:fine-tuning}.

\section{Production Planning Model}

While real-world production planning models are complex with lots of variants for different scenarios, here we show a self-contained, simplified version with only three types of constraints. We refer to \cite{10.5555/1951836} for a detailed introduction.

\allowdisplaybreaks
\begin{align}
    \min \quad &\sum_{t,p,i}C^m_{t,p,i}m_{t,p,i} + \sum_{t,p,i}C^x_{t,p,i}x_{t,p,i} + \sum_{t,p,i}C^{pur}_{t,p,i}pur_{t,p,i} \nonumber\\
    & + \sum_{t,p,i,i',j}C^{rp}_{t,p,i,i',j}rp_{t,p,i,i',j} + \sum_{t,p,i,j}C^r_{t,p,i,j}r_{t,p,i,j}\nonumber\\
    s.t. \quad & inv_{t, p, i} = inv_{t-1, p, i} + inbound_{t, p, i} - outbound_{t, p, i} \label{eq:inv_constraint}\\
    & inbound_{t, p, i} = \sum_{t'} x_{t', p, i} + \sum_{t'} pur_{t', p, i} + \sum_{p'} s_{t, p', p, i} \nonumber\\
    & + \sum_{i', j} rp_{t, p, i, i', j} + \sum_{j} r_{t, p, j, i} + PO_{t, p, i} + WIP_{t, p, i} \nonumber\\
    & outbound_{t, p, i} = \sum_{j} B_{t, p, i, j} x_{t, p, j} + \sum_{p'} s_{t, p, p', i} + \sum_{i', j} rp_{t, p, i', i, j} \nonumber\\
    & + \sum_{j} r_{t, p, i, j} + z_{t, p, i} \nonumber\\
    & \text{for } (t, p, i) \in P \nonumber\\
    & m_{t, p, i} = m_{t-1, p, i} - z_{t, p, i} + D_{t, p, i} \label{eq:delay}\\
    & \text{for } (t, p, i) \in P \nonumber\\
    & \sum_{i'} rp_{t, p, i, i', j} \leq B_{t, p, i, j}x_{t, p, j} \label{eq:rp_max} \\
    & \text{for } (t, p, i, j) \in BOM \nonumber
\end{align}

Indices:

\begin{itemize}
    \item $i, i', j$: items (raw material, sub-assembly or end product).
    \item $p$: plant.
    \item $t$: period.
\end{itemize}

Decision variables:

\begin{itemize}
    \item $x_{t,p,i}, pur_{t,p,i}, z_{t,p,i}$: the production/purchase/deliver amount of item $i$ in plant $p$ at period $t$.
    \item $s_{t,p,p',t}$: the transit amount of item $i$ from plant $p$ to plant $p'$ at period $t$.
    \item $r_{t,p,i,j}$: the amount that item $i$ replace item $j$ in plant $p$ at period $t$.
    \item $rp_{t,p,i,i',j}$: the amount that item $i'$ replace item $i$ to produce item $j$ in plant $p$ at period $t$.
\end{itemize}

State variables (whose value is determined by other decision variables):

\begin{itemize}
    \item $inv_{t,p,i}$: the inventory amount of item $i$ in plant $p$ at period $t$.
    \item $m_{t,p,i}$: the delay amount of item $i$ in plant $p$ at period $t$.
\end{itemize}

All the value of decision and state variables are not less than zero.

Some important constants (note that not all constants are listed due to space limit. Every sum operation contains a constant that controls the range of indices):

\begin{itemize}
    \item $C^m, C^s, C^{pur}, C^{rp}, C^{r}$: the cost of delay, transition, purchase and replacement. (Delay cost will usually dominate the objective)
    \item $P[t, p, i]$: item $i$ will be produced in plant $p$ at period $t$.
    \item $BOM[t, p, i, j]$: $j$ is the parent of $i$ in plant $p$ at period $t$.
    \item $B_{t,p,i,j}$: number of item $i$'s amount that need to be consumed to producing one item $j$.
    \item $PO_{t,p,i}, WIP_{t,p,i}$: the amount of purchase order (PO) / work-in-progress (WIP) of item $i$ in plant $p$ at period $t$.
\end{itemize}

Constraints:

\begin{itemize}
    \item \autoref{eq:inv_constraint}: inventory constraint. The current inventory amount equals to last period's inventory plus inbound minus outbound.
    \item \autoref{eq:delay}: delay constraint. The current delay amount equals to last period's delay amount plus delivery amount minus demand amount.
    \item \autoref{eq:rp_max}: replacement constraint. For all component-assembly relation, the sum of replacement amount cannot exceed the needed amount for assembly's production.
\end{itemize}

%%
%% The next two lines define the bibliography style to be used, and
%% the bibliography file.
\bibliographystyle{ACM-Reference-Format}
\bibliography{references}

%%% -*-BibTeX-*-
%%% Do NOT edit. File created by BibTeX with style
%%% ACM-Reference-Format-Journals [18-Jan-2012].

\begin{thebibliography}{19}

%%% ====================================================================
%%% NOTE TO THE USER: you can override these defaults by providing
%%% customized versions of any of these macros before the \bibliography
%%% command.  Each of them MUST provide its own final punctuation,
%%% except for \shownote{}, \showDOI{}, and \showURL{}.  The latter two
%%% do not use final punctuation, in order to avoid confusing it with
%%% the Web address.
%%%
%%% To suppress output of a particular field, define its macro to expand
%%% to an empty string, or better, \unskip, like this:
%%%
%%% \newcommand{\showDOI}[1]{\unskip}   % LaTeX syntax
%%%
%%% \def \showDOI #1{\unskip}           % plain TeX syntax
%%%
%%% ====================================================================

\ifx \showCODEN    \undefined \def \showCODEN     #1{\unskip}     \fi
\ifx \showDOI      \undefined \def \showDOI       #1{#1}\fi
\ifx \showISBNx    \undefined \def \showISBNx     #1{\unskip}     \fi
\ifx \showISBNxiii \undefined \def \showISBNxiii  #1{\unskip}     \fi
\ifx \showISSN     \undefined \def \showISSN      #1{\unskip}     \fi
\ifx \showLCCN     \undefined \def \showLCCN      #1{\unskip}     \fi
\ifx \shownote     \undefined \def \shownote      #1{#1}          \fi
\ifx \showarticletitle \undefined \def \showarticletitle #1{#1}   \fi
\ifx \showURL      \undefined \def \showURL       {\relax}        \fi
% The following commands are used for tagged output and should be
% invisible to TeX
\providecommand\bibfield[2]{#2}
\providecommand\bibinfo[2]{#2}
\providecommand\natexlab[1]{#1}
\providecommand\showeprint[2][]{arXiv:#2}

\bibitem[\protect\citeauthoryear{Abadi, Barham, Chen, Chen, Davis, Dean, Devin,
  Ghemawat, Irving, Isard, Kudlur, Levenberg, Monga, Moore, Murray, Steiner,
  Tucker, Vasudevan, Warden, Wicke, Yu, and Zheng}{Abadi et~al\mbox{.}}{2016}]%
        {tensorflow}
\bibfield{author}{\bibinfo{person}{Mart{\'\i}n Abadi}, \bibinfo{person}{Paul
  Barham}, \bibinfo{person}{Jianmin Chen}, \bibinfo{person}{Zhifeng Chen},
  \bibinfo{person}{Andy Davis}, \bibinfo{person}{Jeffrey Dean},
  \bibinfo{person}{Matthieu Devin}, \bibinfo{person}{Sanjay Ghemawat},
  \bibinfo{person}{Geoffrey Irving}, \bibinfo{person}{Michael Isard},
  \bibinfo{person}{Manjunath Kudlur}, \bibinfo{person}{Josh Levenberg},
  \bibinfo{person}{Rajat Monga}, \bibinfo{person}{Sherry Moore},
  \bibinfo{person}{Derek~G. Murray}, \bibinfo{person}{Benoit Steiner},
  \bibinfo{person}{Paul Tucker}, \bibinfo{person}{Vijay Vasudevan},
  \bibinfo{person}{Pete Warden}, \bibinfo{person}{Martin Wicke},
  \bibinfo{person}{Yuan Yu}, {and} \bibinfo{person}{Xiaoqiang Zheng}.}
  \bibinfo{year}{2016}\natexlab{}.
\newblock \showarticletitle{TensorFlow: A System for Large-Scale Machine
  Learning}. In \bibinfo{booktitle}{\emph{12th {USENIX} Symposium on Operating
  Systems Design and Implementation ({OSDI} 16)}}. \bibinfo{publisher}{{USENIX}
  Association}, \bibinfo{address}{Savannah, GA}, \bibinfo{pages}{265--283}.
\newblock
\showISBNx{978-1-931971-33-1}
\urldef\tempurl%
\url{https://www.usenix.org/conference/osdi16/technical-sessions/presentation/abadi}
\showURL{%
\tempurl}


\bibitem[\protect\citeauthoryear{Adams}{Adams}{2018}]%
        {ryu}
\bibfield{author}{\bibinfo{person}{Ulf Adams}.}
  \bibinfo{year}{2018}\natexlab{}.
\newblock \showarticletitle{Ry\={u}: Fast Float-to-String Conversion}. In
  \bibinfo{booktitle}{\emph{Proceedings of the 39th ACM SIGPLAN Conference on
  Programming Language Design and Implementation}} (Philadelphia, PA, USA)
  \emph{(\bibinfo{series}{PLDI 2018})}. \bibinfo{publisher}{Association for
  Computing Machinery}, \bibinfo{address}{New York, NY, USA},
  \bibinfo{pages}{270–282}.
\newblock
\showISBNx{9781450356985}
\urldef\tempurl%
\url{https://doi.org/10.1145/3192366.3192369}
\showDOI{\tempurl}


\bibitem[\protect\citeauthoryear{ApS}{ApS}{2019}]%
        {mosek}
\bibfield{author}{\bibinfo{person}{MOSEK ApS}.}
  \bibinfo{year}{2019}\natexlab{}.
\newblock \bibinfo{title}{MOSEK optimization suite}.
\newblock
\newblock


\bibitem[\protect\citeauthoryear{Bradley}{Bradley}{1977}]%
        {BradleyStephenP.1977Amp}
\bibfield{author}{\bibinfo{person}{Stephen~P. Bradley}.}
  \bibinfo{year}{1977}\natexlab{}.
\newblock \bibinfo{booktitle}{\emph{Applied mathematical programming}}.
\newblock \bibinfo{publisher}{Addison-Wesley Pub. Co.},
  \bibinfo{address}{Reading, Mass.}
\newblock
\showISBNx{020100464X}
\showLCCN{^^^76010426^}


\bibitem[\protect\citeauthoryear{Brook, Kendrick, and Meeraus}{Brook
  et~al\mbox{.}}{1988}]%
        {brook1988gams}
\bibfield{author}{\bibinfo{person}{Anthony Brook}, \bibinfo{person}{David
  Kendrick}, {and} \bibinfo{person}{Alexander Meeraus}.}
  \bibinfo{year}{1988}\natexlab{}.
\newblock \showarticletitle{GAMS, a user's guide}.
\newblock \bibinfo{journal}{\emph{ACM Signum Newsletter}} \bibinfo{volume}{23},
  \bibinfo{number}{3-4} (\bibinfo{year}{1988}), \bibinfo{pages}{10--11}.
\newblock


\bibitem[\protect\citeauthoryear{Chen and Wang}{Chen and Wang}{1997}]%
        {chen1997linear}
\bibfield{author}{\bibinfo{person}{Mingyuan Chen} {and} \bibinfo{person}{Weimin
  Wang}.} \bibinfo{year}{1997}\natexlab{}.
\newblock \showarticletitle{A linear programming model for integrated steel
  production and distribution planning}.
\newblock \bibinfo{journal}{\emph{International Journal of Operations \&
  Production Management}} (\bibinfo{year}{1997}).
\newblock


\bibitem[\protect\citeauthoryear{Cornuejols and T{\"u}t{\"u}nc{\"u}}{Cornuejols
  and T{\"u}t{\"u}nc{\"u}}{2006}]%
        {cornuejols2006optimization}
\bibfield{author}{\bibinfo{person}{Gerard Cornuejols} {and}
  \bibinfo{person}{Reha T{\"u}t{\"u}nc{\"u}}.} \bibinfo{year}{2006}\natexlab{}.
\newblock \bibinfo{booktitle}{\emph{Optimization methods in finance}}.
  Vol.~\bibinfo{volume}{5}.
\newblock \bibinfo{publisher}{Cambridge University Press}.
\newblock


\bibitem[\protect\citeauthoryear{Dimitriadis, Shah, and Pantelides}{Dimitriadis
  et~al\mbox{.}}{1997}]%
        {DIMITRIADIS1997S1061}
\bibfield{author}{\bibinfo{person}{A.D. Dimitriadis}, \bibinfo{person}{N.
  Shah}, {and} \bibinfo{person}{C.C. Pantelides}.}
  \bibinfo{year}{1997}\natexlab{}.
\newblock \showarticletitle{RTN-based rolling horizon algorithms for medium
  term scheduling of multipurpose plants}.
\newblock \bibinfo{journal}{\emph{Computers \& Chemical Engineering}}
  \bibinfo{volume}{21} (\bibinfo{year}{1997}), \bibinfo{pages}{S1061 -- S1066}.
\newblock
\showISSN{0098-1354}


\bibitem[\protect\citeauthoryear{Epstein, Neely, Weintraub, Valenzuela,
  Hurtado, Gonzalez, Beiza, Naveas, Infante, Alarcon, et~al\mbox{.}}{Epstein
  et~al\mbox{.}}{2012}]%
        {epstein2012strategic}
\bibfield{author}{\bibinfo{person}{Rafael Epstein}, \bibinfo{person}{Andres
  Neely}, \bibinfo{person}{Andres Weintraub}, \bibinfo{person}{Fernando
  Valenzuela}, \bibinfo{person}{Sergio Hurtado}, \bibinfo{person}{Guillermo
  Gonzalez}, \bibinfo{person}{Alex Beiza}, \bibinfo{person}{Mauricio Naveas},
  \bibinfo{person}{Florencio Infante}, \bibinfo{person}{Fernando Alarcon},
  {et~al\mbox{.}}} \bibinfo{year}{2012}\natexlab{}.
\newblock \showarticletitle{A strategic empty container logistics optimization
  in a major shipping company}.
\newblock \bibinfo{journal}{\emph{Interfaces}} \bibinfo{volume}{42},
  \bibinfo{number}{1} (\bibinfo{year}{2012}), \bibinfo{pages}{5--16}.
\newblock


\bibitem[\protect\citeauthoryear{Fourer, Gay, and Kernighan}{Fourer
  et~al\mbox{.}}{1990}]%
        {fourer1990modeling}
\bibfield{author}{\bibinfo{person}{Robert Fourer}, \bibinfo{person}{David~M
  Gay}, {and} \bibinfo{person}{Brian~W Kernighan}.}
  \bibinfo{year}{1990}\natexlab{}.
\newblock \showarticletitle{A modeling language for mathematical programming}.
\newblock \bibinfo{journal}{\emph{Management Science}} \bibinfo{volume}{36},
  \bibinfo{number}{5} (\bibinfo{year}{1990}), \bibinfo{pages}{519--554}.
\newblock


\bibitem[\protect\citeauthoryear{{Gurobi Optimization, LLC}}{{Gurobi
  Optimization, LLC}}{2018}]%
        {gurobi2018gurobi}
\bibfield{author}{\bibinfo{person}{{Gurobi Optimization, LLC}}.}
  \bibinfo{year}{2018}\natexlab{}.
\newblock \showarticletitle{Gurobi optimizer reference manual}.
\newblock
  \bibinfo{howpublished}{\url{https://www.gurobi.com/documentation/9.0/refman/index.html}}.
\newblock  (\bibinfo{year}{2018}).
\newblock


\bibitem[\protect\citeauthoryear{{Gurobi Optimization, LLC}}{{Gurobi
  Optimization, LLC}}{2019}]%
        {gurobi_modeling_examples}
\bibfield{author}{\bibinfo{person}{{Gurobi Optimization, LLC}}.}
  \bibinfo{year}{2019}\natexlab{}.
\newblock \bibinfo{title}{Gurobi modeling examples}.
\newblock
  \bibinfo{howpublished}{\url{https://gurobi.github.io/modeling-examples/}}.
\newblock


\bibitem[\protect\citeauthoryear{Hart, Watson, and Woodruff}{Hart
  et~al\mbox{.}}{2011}]%
        {hart2011pyomo}
\bibfield{author}{\bibinfo{person}{William~E Hart}, \bibinfo{person}{Jean-Paul
  Watson}, {and} \bibinfo{person}{David~L Woodruff}.}
  \bibinfo{year}{2011}\natexlab{}.
\newblock \showarticletitle{Pyomo: modeling and solving mathematical programs
  in Python}.
\newblock \bibinfo{journal}{\emph{Mathematical Programming Computation}}
  \bibinfo{volume}{3}, \bibinfo{number}{3} (\bibinfo{year}{2011}),
  \bibinfo{pages}{219}.
\newblock


\bibitem[\protect\citeauthoryear{Koch}{Koch}{2004}]%
        {Koch2004}
\bibfield{author}{\bibinfo{person}{Thorsten Koch}.}
  \bibinfo{year}{2004}\natexlab{}.
\newblock \emph{\bibinfo{title}{Rapid Mathematical Programming}}.
\newblock \bibinfo{thesistype}{Ph.D. Dissertation}. \bibinfo{school}{Technische
  {Universit\"at} Berlin}.
\newblock
\urldef\tempurl%
\url{http://www.zib.de/Publications/abstracts/ZR-04-58/}
\showURL{%
\tempurl}
\newblock
\shownote{ZIB-Report 04-58.}


\bibitem[\protect\citeauthoryear{Lofberg}{Lofberg}{2004}]%
        {lofberg2004yalmip}
\bibfield{author}{\bibinfo{person}{Johan Lofberg}.}
  \bibinfo{year}{2004}\natexlab{}.
\newblock \showarticletitle{YALMIP: A toolbox for modeling and optimization in
  MATLAB}. In \bibinfo{booktitle}{\emph{2004 IEEE international conference on
  robotics and automation (IEEE Cat. No. 04CH37508)}}. IEEE,
  \bibinfo{pages}{284--289}.
\newblock


\bibitem[\protect\citeauthoryear{Lubin and Dunning}{Lubin and Dunning}{2015}]%
        {LubinDunningIJOC}
\bibfield{author}{\bibinfo{person}{Miles Lubin} {and} \bibinfo{person}{Iain
  Dunning}.} \bibinfo{year}{2015}\natexlab{}.
\newblock \showarticletitle{Computing in Operations Research Using Julia}.
\newblock \bibinfo{journal}{\emph{INFORMS Journal on Computing}}
  \bibinfo{volume}{27}, \bibinfo{number}{2} (\bibinfo{year}{2015}),
  \bibinfo{pages}{238--248}.
\newblock
\urldef\tempurl%
\url{https://doi.org/10.1287/ijoc.2014.0623}
\showDOI{\tempurl}


\bibitem[\protect\citeauthoryear{Paszke, Gross, Massa, Lerer, Bradbury, Chanan,
  Killeen, Lin, Gimelshein, Antiga, Desmaison, Kopf, Yang, DeVito, Raison,
  Tejani, Chilamkurthy, Steiner, Fang, Bai, and Chintala}{Paszke
  et~al\mbox{.}}{2019}]%
        {PyTorch}
\bibfield{author}{\bibinfo{person}{Adam Paszke}, \bibinfo{person}{Sam Gross},
  \bibinfo{person}{Francisco Massa}, \bibinfo{person}{Adam Lerer},
  \bibinfo{person}{James Bradbury}, \bibinfo{person}{Gregory Chanan},
  \bibinfo{person}{Trevor Killeen}, \bibinfo{person}{Zeming Lin},
  \bibinfo{person}{Natalia Gimelshein}, \bibinfo{person}{Luca Antiga},
  \bibinfo{person}{Alban Desmaison}, \bibinfo{person}{Andreas Kopf},
  \bibinfo{person}{Edward Yang}, \bibinfo{person}{Zachary DeVito},
  \bibinfo{person}{Martin Raison}, \bibinfo{person}{Alykhan Tejani},
  \bibinfo{person}{Sasank Chilamkurthy}, \bibinfo{person}{Benoit Steiner},
  \bibinfo{person}{Lu Fang}, \bibinfo{person}{Junjie Bai}, {and}
  \bibinfo{person}{Soumith Chintala}.} \bibinfo{year}{2019}\natexlab{}.
\newblock \showarticletitle{PyTorch: An Imperative Style, High-Performance Deep
  Learning Library}. In \bibinfo{booktitle}{\emph{Advances in Neural
  Information Processing Systems}},
  \bibfield{editor}{\bibinfo{person}{H.~Wallach},
  \bibinfo{person}{H.~Larochelle}, \bibinfo{person}{A.~Beygelzimer},
  \bibinfo{person}{F.~d\textquotesingle Alch\'{e}-Buc},
  \bibinfo{person}{E.~Fox}, {and} \bibinfo{person}{R.~Garnett}} (Eds.),
  Vol.~\bibinfo{volume}{32}. \bibinfo{publisher}{Curran Associates, Inc.},
  \bibinfo{pages}{8026--8037}.
\newblock
\urldef\tempurl%
\url{https://proceedings.neurips.cc/paper/2019/file/bdbca288fee7f92f2bfa9f7012727740-Paper.pdf}
\showURL{%
\tempurl}


\bibitem[\protect\citeauthoryear{Pochet and Wolsey}{Pochet and Wolsey}{2010}]%
        {10.5555/1951836}
\bibfield{author}{\bibinfo{person}{Yves Pochet} {and}
  \bibinfo{person}{Laurence~A. Wolsey}.} \bibinfo{year}{2010}\natexlab{}.
\newblock \bibinfo{booktitle}{\emph{Production Planning by Mixed Integer
  Programming} (\bibinfo{edition}{1st} ed.)}.
\newblock \bibinfo{publisher}{Springer Publishing Company, Incorporated}.
\newblock
\showISBNx{144192132X}


\bibitem[\protect\citeauthoryear{Sethi and Sorger}{Sethi and Sorger}{1991}]%
        {Sethi1991}
\bibfield{author}{\bibinfo{person}{Suresh Sethi} {and} \bibinfo{person}{Gerhard
  Sorger}.} \bibinfo{year}{1991}\natexlab{}.
\newblock \showarticletitle{A theory of rolling horizon decision making}.
\newblock \bibinfo{journal}{\emph{Annals of Operations Research}}
  \bibinfo{volume}{29}, \bibinfo{number}{1} (\bibinfo{date}{01 Dec}
  \bibinfo{year}{1991}), \bibinfo{pages}{387--415}.
\newblock
\showISSN{1572-9338}
\urldef\tempurl%
\url{https://doi.org/10.1007/BF02283607}
\showDOI{\tempurl}


\end{thebibliography}

\end{document}